\documentclass[pdflatex,sn-mathphys-num]{sn-jnl}

\usepackage{graphicx}%
\usepackage{multirow}%
\usepackage{amsmath,amssymb,amsfonts}%
\usepackage{amsthm}%
\usepackage{mathrsfs}%
\usepackage[title]{appendix}%
\usepackage{xcolor}%
\usepackage{textcomp}%
\usepackage{manyfoot}%
\usepackage{booktabs}%
\usepackage{algorithm}%
\usepackage{algorithmicx}%
\usepackage{algpseudocode}%
\usepackage{listings}%

\newtheorem{theorem}{Theorem}
\newtheorem{assumption}{Assumption}
\newtheorem{corollary}{Corollary}
\newtheorem{remark}{Remark}
\newtheorem{lemma}{Lemma}
\newtheorem{proposition}{Proposition}

\newcommand{\pd}[2]{\frac{\partial #1}{\partial #2}}
\newcommand{\dd}{\mathrm{d}}
\newcommand{\ii}{\mathrm{i}}
\newcommand{\ee}{\mathrm{e}}
\newcommand{\Rey}{R}
\newcommand{\Ai}{\mathrm{Ai}}
\newcommand{\Imag}{\operatorname{Im}}
\newcommand{\Real}{\operatorname{Re}}

\theoremstyle{thmstyleone}%

\theoremstyle{thmstyletwo}%

\theoremstyle{thmstylethree}%

\begin{document}

\title{\bf Spectral perturbation theory for wall-admittance effects on compressible boundary-layer instability}

\author[1]{\fnm{Jiguang} \sur{Yu}}\email{jyu678@bu.edu}
\equalcont{This author contributed equally to this work as co-first authors.}

\author*[2]{\fnm{Louis Shuo} \sur{Wang}}\email{swang116@vols.utk.edu}
\equalcont{This author contributed equally to this work as co-first authors.}

\author*[3]{\fnm{Ye} \sur{Liang}}\email{zcahiad@ucl.ac.uk}
\equalcont{This author contributed equally to this work as co-first authors.}

\affil[1]{\orgdiv{College of Engineering},
  \orgname{Boston University},
  \orgaddress{\city{Boston}, \postcode{02215}, \state{MA}, \country{United States}}}

\affil[2]{\orgdiv{Department of Mathematics}, 
  \orgname{Northeastern University}, 
  \orgaddress{\city{Boston}, \postcode{02115}, \state{MA}, \country{United States}}}

\affil[3]{\orgdiv{Department of Mathematics}, 
  \orgname{University College London}, 
  \orgaddress{\city{London}, \postcode{WC1E 6BT}, \country{United Kingdom}}}

\abstract{Thin wall treatments modify high-speed boundary-layer instability through
the pressure they admit or absorb at the wall.  This paper develops a
unified admittance formulation for such effects on trapped compressible
Rayleigh modes.  For a simple rigid-wall eigenpair, we prove the spectral
sensitivity law
\[
        c(A)=c_0+KA+\mathcal O(|A|^2),
        \qquad
        \delta\sigma=\alpha\Imag(KA)+\mathcal O(|A|^2),
\]
where \(A\) is the wall admittance and \(K\) is an explicit functional of
the rigid-wall eigenfunction.  The formula separates wall physics from
outer-mode physics and yields a phase criterion for stabilisation.  Matched
asymptotics show that viscous and thermal wall layers, blind-pore coatings
and shallow non-separating roughness all reduce to this same boundary
condition, with additive leading admittances.  Mach-4.5 computations
validate the sensitivity coefficient and demonstrate porous damping,
viscous-wall damping and sign-changing reactive roughness effects.}

\keywords{compressible boundary-layer instability; Mack modes; wall admittance; spectral perturbation theory; porous coatings; roughness homogenisation}

\pacs[MSC Classification]{76E09, 76N20, 35P15, 35Q35, 47A55, 76M45}

\maketitle

\section{Introduction and main results}
\label{sec:intro}

The growth of instability waves in laminar high-speed boundary layers
controls laminar--turbulent transition and thereby surface heating and drag
on supersonic and hypersonic vehicles \citep{mack1984boundary,fedorov2011transition,wang2025analysis,schuabb2025hypersonic,zhuang2024instability,chen2022boundary,wu2023new,cheng2024progress,wang2025analysis1,cao2023stability,gao2022rolling}.  
At the Mach numbers of interest the dominant instabilities are the first
(inflectional) mode, rendered inviscidly unstable by the generalised
inflection point \citep{yu2026from,lees1946investigation,liang2023inviscid,liu2025bidirectional,marxen2014direct,dong2021asymptotic}, and the second (Mack) mode, a trapped
acoustic mode of the waveguide formed between the wall and the sonic line
\citep{mack1984boundary,dong2020receptivity,lu2025acoustic}.  Because these modes are inviscid at leading
order, with $\mathcal O(1)$ pressure fluctuations at the wall, they are sensitive to
what the wall does with that pressure: to the transpiration admitted by a
porous coating \citep{cai2026optimal,fedorov2001stabilization,liang2025global,rasheed2002experiments,fedorov2011transition}, to the
displacement of the effective impermeable plane by distributed roughness
\citep{fujii2006experiment,fransson2006delaying,chuvakhov2016spontaneous,wang2026algebraic}, and to the mass flux pumped by
the thin viscous Stokes and thermal layers that survive at the wall even
when the mode itself is inviscid \citep{mack1984boundary,wang2025multi,gaponov1975effect}.

Each of these wall effects possesses a substantial literature and its own
formalism: viscous corrections are computed from full Orr--Sommerfeld-type
systems or from triple-deck theory \citep{smith1979non,yu2026beyond,skorokhodov2007numerical,gallagher2016effect,wang2026damage,landahl1990sublayer,huang1983application,rizzetta1978triple}; coatings are inserted
as semi-empirical acoustic impedances \citep{fedorov2001stabilization,bres2013second,
tritarelli2015stabilization}; roughness is treated by homogenisation
\citep{bechert1989viscous,yu2026pattern,luchini1991resistance} coupled to receptivity or stability
analyses.  The novelty of the present paper is not the separate existence of
these reductions.  \emph{The contribution is the asymptotic and spectral
synthesis}: under explicit ordering assumptions, all three mechanisms ---
and their combinations --- reduce to the same admittance boundary condition
on the same outer compressible Rayleigh problem, and the leading effect of
any of them on a trapped eigenmode is determined by one sensitivity
coefficient computable from the rigid-wall eigenfunction.  This separates
wall physics, represented by a complex admittance $A$, from outer-mode
physics, represented by a complex coefficient $K$, and gives an explicit
phase criterion for stabilisation.  The admittance description itself has a
long history for compliant walls in incompressible flow
\citep{benjamin1960effects,landahl1962stability,wang2026breakdown,carpenter1985hydrodynamic,koseki2025understanding,malik2018growth} and for liner acoustics
\citep{zwikker1949sound,yu2026rigorous,melling1973acoustic}; its use here as the common endpoint
of matched asymptotics for viscous, porous and rough walls, together with
the sensitivity calculus, is, to our knowledge, new.

The main results are as follows.  Let $\hat p(y)\ee^{\ii\alpha(x-ct)}$ be a
pressure disturbance of the compressible Rayleigh problem
(\S\ref{sec:rayleigh}) on a boundary layer with profiles $U(y),T(y)$ at Mach
number $M$, and let the wall enter only through the admittance
\begin{equation}
A(\alpha,c)=-\,\frac{\hat v(0^+)}{\hat p(0^+)},
\label{eq:Adef-intro}
\end{equation}
the ratio of outward wall-normal velocity to pressure seen by the outer flow
at the wall ($A=0$: rigid; $\Real A\ge0$: passive).  Then:
\begin{enumerate}
\item[(i)] (\emph{Theorem~\ref{thm:sens}, \S\ref{sec:theorem}.})  If
$(\hat p_0,c_0)$ is a simple, trapped, rigid-wall eigenpair, the perturbed
eigenvalue is
\begin{equation}
c(A)=c_0+KA+\mathcal O(|A|^2),
\qquad
K=\frac{\ii\alpha\,\hat p_0(0)^2}{c_0\,I},
\label{eq:mainresult}
\end{equation}
with $I$ the explicit quadratic functional \eqref{eq:Idef} of the rigid-wall
eigenfunction; consequently the growth rate $\sigma=\alpha c_i$ shifts by
$\delta\sigma=\alpha\Imag(KA)+\mathcal O(|A|^2)$, and an admittance of phase
$\varphi$ stabilises the mode if and only if
$\Imag(K\ee^{\ii\varphi})<0$ (Corollary~\ref{cor:phase}).
\item[(ii)] (\emph{\S\ref{sec:mechanisms}.})  Under the thin-layer ordering
of Assumption~\ref{ass:thin}, the viscous Stokes and thermal sublayers, a
coating of blind cylindrical pores, and shallow, dense, non-separating
roughness possess the closed-form admittances \eqref{eq:Av},
\eqref{eq:AT}, \eqref{eq:Ap} and \eqref{eq:Ar}, additive in combination.
\item[(iii)] (\emph{\S\ref{sec:numerics}.})  For a Mach-4.5 adiabatic
flat-plate boundary layer, direct solutions of the admittance eigenvalue
problem validate \eqref{eq:mainresult} to relative errors below
$5\times10^{-4}$ (table~\ref{tab:val}) and confirm the phase criterion in
detail; in particular the sign of the purely reactive roughness effect
reverses along the second-mode branch exactly where $\arg K$ crosses
$-\pi/2$.
\end{enumerate}

The objective is deliberately limited.  We do not seek a replacement for
full viscous stability theory, direct simulation, or spatial transition
prediction.  We identify the common asymptotic structure of several thin
wall mechanisms, prove the spectral perturbation result that makes the
reduction useful, and demonstrate both at a representative hypersonic
condition.  The Mach-4.5 calculations are a proof of concept: they validate
the sensitivity formula against direct eigenvalue computations and show how
porous absorption, viscous wall layers and shallow roughness occupy
different regions of the same complex-admittance plane.  Amplitude-dependent
and acoustic admittances can be inserted into the same boundary condition;
these extensions are outlined, deliberately briefly, in Appendix~\ref{appsec:extensions}.

The paper is organised as follows.  Section~\ref{sec:rayleigh} formulates
the compressible Rayleigh problem with wall admittance and records the
passivity constraint.  Section~\ref{sec:theorem} states the spectral
setting, proves Theorem~\ref{thm:sens} and its phase corollary, and
delimits their use.  Section~\ref{sec:mechanisms} derives the effective
admittances of the thin wall mechanisms under explicit ordering
assumptions.  Section~\ref{sec:numerics} validates and demonstrates the
theory at Mach 4.5.  Section~\ref{sec:discussion} discusses limitations and
extensions.  Appendices contain the uniformly valid (Airy-function) form of
the viscous-layer admittance and its triple-deck connection, the pore-scale
derivation of the coating coefficients, the far-field tail contribution to
the sensitivity functional, and numerical details.

\section{The compressible Rayleigh problem with wall admittance}
\label{sec:rayleigh}

Velocities are scaled by the free-stream speed $U_\infty^*$, temperature and
density by their free-stream values, pressure perturbations by
$\rho_\infty^*U_\infty^{*2}$, and lengths by the local Blasius scale
$L^*=(2\nu_\infty^*x^*/U_\infty^*)^{1/2}$, so that
$\Rey=U_\infty^*L^*/\nu_\infty^*=(2Re_x)^{1/2}\gg1$.  The base flow is a
compressible boundary layer $U(y),T(y)$ of a perfect gas
($\gamma=1.4$); the numerical illustrations use the compressible Blasius
solution with the Chapman law $\mu=T$, unit base-flow Prandtl number and an
adiabatic wall at $M=4.5$, for which $T_w=5.05$,
$\lambda_w\equiv U'(0)=0.0930$ and $\delta_{99}=10.27$
(figure~\ref{fig:1}a).  Disturbances take the locally parallel form
$(\hat u,\hat v,\hat p,\hat\theta)(y)\,\ee^{\ii\alpha(x-ct)}+\mathrm{c.c.}$
with $\alpha$ real and $c=c_r+\ii c_i$ complex; the temporal growth rate is
$\sigma=\alpha c_i$.

Outside all wall layers the pressure obeys the compressible Rayleigh
equation
\begin{equation}
\hat p''-\left(\frac{2U'}{U-c}-\frac{T'}{T}\right)\hat p'
-\alpha^2\!\left(1-\frac{M^2(U-c)^2}{T}\right)\hat p=0,
\label{eq:rayleigh}
\end{equation}
with the wall-normal velocity recovered from the $y$-momentum equation
$\ii\alpha\bar\rho(U-c)\hat v=-\hat p'$, $\bar\rho=1/T$.  Equation
\eqref{eq:rayleigh} admits the self-adjoint form
\[
\big(\Phi\,\hat p'\big)'=\alpha^2\big(\Phi-M^2\big)\hat p,
\qquad
\Phi(y;c)=\frac{T}{(U-c)^2}.
\]
In the uniform stream the decaying solution is
\begin{equation}
\hat p\sim\exp[-\gamma\,(y-y_b)],
\qquad
\gamma=\alpha\big[1-M^2(1-c)^2\big]^{1/2},\quad\Real\gamma>0,
\label{eq:farfield}
\end{equation}
applied at the edge $y_b$ of the computational domain; modes with
$\Real\gamma>0$ are trapped.

All wall physics is condensed into the admittance \eqref{eq:Adef-intro},
where $0^+$ denotes the bottom of the outer region after the wall sublayers
have been absorbed.  Combining \eqref{eq:Adef-intro} with the $y$-momentum
equation at $y=0$ (where $U=0$) gives the outer wall condition in pressure
form,
\begin{equation}
\hat p'(0)=-\,\frac{\ii\alpha c\,A}{T_w}\,\hat p(0).
\label{eq:wallBC}
\end{equation}
Here positive $\hat v$ is taken in the outward normal direction, from the
wall into the fluid, so that $-\overline{p'v'}\,|_{y=0}$ is the acoustic
power per unit area entering the wall; for a neutral wave it equals
$\tfrac12|\hat p_w|^2\Real A$, whence a \emph{passive} wall --- one that can
only absorb disturbance energy --- has $\Real A\ \ge\ 0$, with the standard generalisation for $c_i\neq0$.  Passivity serves
repeatedly below as a sign check on the asymptotics; it does \emph{not} by
itself imply stabilisation, which is governed by the phase of $A$ relative
to that of the mode's sensitivity coefficient (Corollary~\ref{cor:phase}).
We present two extensions in Appendix~\ref{appsec:extensions} to illustrate the reach of the boundary condition
\eqref{eq:wallBC} beyond linear, trapped dynamics; both are presented as
outlines, not as developed theories.

\section{The spectral sensitivity theorem}
\label{sec:theorem}

\subsection{Dispersion function and analytic eigenvalue branch}
\label{ssec:dispersion}

We first make explicit the analytic eigenvalue branch to which the
sensitivity calculation applies.  Fix \(\alpha\) and choose the branch of
\[
        \gamma(c)=\alpha\big[1-M^2(1-c)^2\big]^{1/2}
\]
with \(\Real\gamma(c_0)>0\).  In a neighbourhood of a trapped eigenvalue
\(c_0\), away from the sonic branch points and from critical singularities
on the real \(y\)-axis, the coefficients of the Rayleigh equation
\eqref{eq:rayleigh} and the far-field exponent \(\gamma(c)\) are analytic
functions of \(c\).  Let \(P(y;c)\) denote the solution of
\eqref{eq:rayleigh} obtained by shooting from \(y=y_b\) with the normalisation
\[
        P(y_b;c)=1,\qquad
        P'(y_b;c)=-\gamma(c).
\]
Standard analytic dependence of solutions of linear ordinary differential
equations on parameters implies that \(P(0;c)\) and \(P'(0;c)\) are analytic
functions of \(c\) in this neighbourhood.

For a wall admittance \(A\), define the scalar dispersion function
\[
        D(c,A)
        =
        P'(0;c)
        +\frac{\ii\alpha c A}{T_w}P(0;c).
\]
The eigenvalues of the Rayleigh problem with admittance boundary condition
\eqref{eq:wallBC} are precisely the zeros of \(D(c,A)\).  In particular,
the rigid-wall eigenvalue satisfies
$D(c_0,0)=P'(0;c_0)=0$.
We call \(c_0\) simple if
\begin{equation}
        D_c(c_0,0)\ne0 .
\label{eq:simple-dispersion}
\end{equation}
Equivalently, the zero of the rigid-wall dispersion relation
\(D(c,0)=P'(0;c)\) at \(c=c_0\) is simple.  Under this condition, the
implicit function theorem gives a unique analytic eigenvalue branch
\(c=c(A)\), for \(|A|\) sufficiently small, such that
$D(c(A),A)=0$ and $c(0)=c_0$.
Differentiating \(D(c(A),A)=0\) at \(A=0\) gives
$\displaystyle c'(0)=-\frac{D_A(c_0,0)}{D_c(c_0,0)}$.
The remainder of the proof of Theorem~\ref{thm:sens} evaluates this quotient
by Green's identity and shows that it is equal to
$\displaystyle \frac{\ii\alpha\hat p_0(0)^2}{c_0 I}$.

\begin{assumption}[Spectral setting]
\label{ass:spectral}
The rigid-wall eigenvalue \(c_0\) is a trapped eigenvalue of the dispersion
relation \(D(c,0)=0\), with \(\Real\gamma(c_0)>0\), and is simple in the
sense of \eqref{eq:simple-dispersion}.  The chosen neighbourhood of \(c_0\)
does not contain sonic branch points or critical singularities on the real
\(y\)-axis; if a critical point is present, the standard causal indentation
is used.  The admittance perturbation is measured by \(A(\alpha,c_0)\).  If
the physical wall law has the form \(A=\varepsilon\mathcal A(\alpha,c)\),
with \(\varepsilon\ll1\) and \(\mathcal A\) analytic near \(c_0\), then the
\(c\)-dependence of \(\mathcal A\) contributes only to the \(\mathcal O(\varepsilon^2)\)
term in the eigenvalue expansion.
\end{assumption}

\begin{lemma}[The denominator as a dispersion derivative]
\label{lem:I-Dc}
Let \(P(y;c)\) be the shooting solution used in
\S\ref{ssec:dispersion}, normalised by
\[
        P(y_b;c)=1,\qquad P'(y_b;c)=-\gamma(c),
\]
and let $D_0(c):=D(c,0)=P'(0;c)$
be the rigid-wall dispersion function.  Suppose that
\(D_0(c_0)=0\), and set
$\hat p_0(y)=P(y;c_0)$.
Then $\displaystyle D_0'(c_0)
        =
        -\,\frac{c_0^2}{T_w\hat p_0(0)}\,I$,
where \(I\) is the denominator defined in \eqref{eq:Idef}.  In particular,
        $I=0$ if and only if
        $D_0'(c_0)=0$.
Thus, if \(c_0\) is a simple zero of \(D_0\), then \(I\neq0\).
\end{lemma}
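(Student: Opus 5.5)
The plan is a Green's-identity (Lagrange-identity) calculation on the self-adjoint form $(\Phi\hat p')'=\alpha^2(\Phi-M^2)\hat p$, applied to the eigenfunction and its $c$-derivative. I take \eqref{eq:Idef} to be the quadratic functional that appears naturally in that calculation, namely a weighted integral of $\hat p_0'^2+\alpha^2\hat p_0^2$ plus an edge (tail) term at $y=y_b$. Since I have not seen \eqref{eq:Idef} at this point, I will derive the functional and then match it term by term to that definition; the final step is this check.

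\emph{Step 1: the variational equation.} Write $Q(y)=\partial_c P(y;c)|_{c=c_0}$. By the analytic dependence already invoked in \S\ref{ssec:dispersion}, $Q$ exists, and $D_0'(c_0)=Q'(0)$. Differentiating the normalisation gives $Q(y_b)=0$ and $Q'(y_b)=-\gamma'(c_0)$. Differentiating the self-adjoint form with respect to $c$ gives
\[
(\Phi Q')'-\alpha^2(\Phi-M^2)Q=-(\Phi_c\hat p_0')'+\alpha^2\Phi_c\hat p_0,
\qquad \Phi_c=\frac{2T}{(U-c_0)^3}.
\]
The $M^2$ term does not depend on $c$, so it does not contribute to the right-hand side.

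\emph{Step 2: Green's identity.} Multiply the equation for $Q$ by $\hat p_0$, and the equation for $\hat p_0$ by $Q$. Subtract and integrate over $[0,y_b]$. The left-hand side becomes the boundary term $[\Phi(\hat p_0Q'-Q\hat p_0')]_0^{y_b}$. The boundary data are:
\begin{itemize}
\item at $y=0$: $\hat p_0'(0)=0$ (rigid wall) and $\Phi(0)=T_w/c_0^2$ (since $U(0)=0$), so this end contributes $-(T_w/c_0^2)\hat p_0(0)\,D_0'(c_0)$;
\item at $y=y_b$: $\hat p_0(y_b)=1$ and $Q(y_b)=0$, so this end contributes $-\Phi(y_b)\gamma'(c_0)$.
\end{itemize}
On the right-hand side, integrate the $-(\Phi_c\hat p_0')'\hat p_0$ term by parts. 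The wall boundary term vanishes because $\hat p_0'(0)=0$. The edge term equals $\Phi_c(y_b)\gamma(c_0)$, using $\hat p_0'(y_b)=-\gamma$. What remains is $\int_0^{y_b}\Phi_c(\hat p_0'^2+\alpha^2\hat p_0^2)\,\dd y$.

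\emph{Step 3: rearrange.} Collecting terms gives
\[
D_0'(c_0)=-\frac{c_0^2}{T_w\hat p_0(0)}\Big[\int_0^{y_b}\frac{2T}{(U-c_0)^3}\big(\hat p_0'^2+\alpha^2\hat p_0^2\big)\dd y+\Phi_c(y_b)\gamma+\Phi(y_b)\gamma'\Big],
\]
and the bracket is to be identified with $I$. This identification is the main point to check. I would verify the edge terms by a direct computation in the uniform stream. There $U=T=1$ and $\hat p_0=\ee^{-\gamma(y-y_b)}$, so the identity $\Phi_c\gamma+\Phi\gamma'=\int_{y_b}^\infty\Phi_c(\hat p_0'^2+\alpha^2\hat p_0^2)\,\dd y$ follows from $\gamma\gamma'=\alpha^2M^2(1-c)$. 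This shows that $I$ can equivalently be read as the integral extended to infinity, consistent with the tail appendix; the ``trapped'' condition $\Real\gamma>0$ is what makes that tail integral converge. If \eqref{eq:Idef} is written with a different sign convention or normalisation (for example $\hat p_0(y_b)=1$), the same computation fixes the constant.

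\emph{Step 4: the equivalence $I=0\iff D_0'(c_0)=0$.} This needs $\hat p_0(0)\ne0$ and $c_0\ne0$. The first holds because, if $\hat p_0(0)=0$ as well as $\hat p_0'(0)=0$, uniqueness for the linear ODE would force $\hat p_0\equiv0$, contradicting $\hat p_0(y_b)=1$. The second holds because $\Real c_0$ lies strictly inside the $U$-range, and in any case the critical-point exclusion of Assumption~\ref{ass:spectral} keeps $U-c_0\ne0$ on the path of integration. Where a critical layer is present, every integral above is taken along the causal indented contour, on which the integration by parts is unchanged. The final claim, that $I\neq0$ whenever $c_0$ is a simple zero, then follows immediately.
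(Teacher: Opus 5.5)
Your proposal is correct and follows essentially the paper's proof. Like the paper, you differentiate the self-adjoint form in $c$, pair the variational equation with $\hat p_0$, evaluate the endpoints using $\hat p_0'(0)=0$, $Q(y_b)=0$ and $Q'(y_b)=-\gamma'(c_0)$, and identify $\Phi_c(y_b)\gamma_0+\Phi(y_b)\gamma'(c_0)$ with $I_\infty$ through $\gamma\gamma'=\alpha^2M^2(1-c)$. Your ODE-uniqueness argument that $\hat p_0(0)\neq0$, together with the remark that $c_0\neq0$, usefully makes explicit what the paper's final equivalence step leaves implicit.
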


\begin{proof}
Differentiate the self-adjoint Rayleigh equation
\[
        (\Phi \hat p')'=\alpha^2(\Phi-M^2)\hat p,
        \qquad
        \Phi(y;c)=\frac{T}{(U-c)^2},
\]
with respect to \(c\) at \(c=c_0\).  Write
        $q(y)=\pd{P}{c}(y;c_0)$ and 
        $\Phi_0(y)=\Phi(y;c_0)$.
Since \(\hat p_0=P(\cdot;c_0)\), the differentiated equation is
\begin{equation}
        (\Phi_0 q')'
        -
        \alpha^2(\Phi_0-M^2)q
        =
        -\left[
        \left(\pd{\Phi}{c}\Big|_{c_0}\hat p_0'\right)'
        -
        \alpha^2\pd{\Phi}{c}\Big|_{c_0}\hat p_0
        \right].
\label{eq:q-eqn}
\end{equation}
Multiply \eqref{eq:q-eqn} by \(\hat p_0\), integrate over
\((0,y_b)\), and integrate by parts on both sides, using the homogeneous equation
satisfied by \(\hat p_0\).  This yields the exact Green's identity for the parameter derivative:
\begin{equation}
        \Big[
        \Phi_0
        \big(
        \hat p_0 q'-\hat p_0' q
        \big)
        +
        \pd{\Phi}{c}\Big|_{c_0}\hat p_0\,\hat p_0'
        \Big]_{0}^{y_b}
        =
        \int_0^{y_b}
        \pd{\Phi}{c}\Big|_{c_0}
        \left(
        \hat p_0'^{\,2}
        +
        \alpha^2\hat p_0^{\,2}
        \right)\dd y .
\label{eq:lemma-full}
\end{equation}

We now evaluate the endpoint terms on the left-hand side.  At the rigid wall,
\[
        \hat p_0'(0)=0,\qquad
        q'(0)=\pd{}{c}P'(0;c)\Big|_{c=c_0}=D_0'(c_0).
\]
Because \(\hat p_0'(0)=0\), the term proportional to \(\partial\Phi/\partial c\) vanishes identically, hence the lower endpoint contributes
$\displaystyle -\Phi_0(0)\hat p_0(0)q'(0)
        =
        -\frac{T_w}{c_0^2}\hat p_0(0)D_0'(c_0)$.

At \(y_b\), the shooting normalisation gives \(P(y_b;c)=1\), so
\(\hat p_0(y_b)=1\) and \(q(y_b)=0\).  Moreover \(P'(y_b;c)=-\gamma(c)\), so
\(\hat p_0'(y_b)=-\gamma_0\) and \(q'(y_b)=-\gamma'(c_0)\).
Direct substitution into the upper boundary term yields exactly \(-I_\infty\):
\[
        -\Phi_0(y_b)\gamma'(c_0)
        -
        \pd{\Phi}{c}\Big|_{c_0}(y_b)\gamma_0
        =
        -\frac{1}{(1-c_0)^2}\frac{\alpha^2-\gamma_0^2}{\gamma_0(1-c_0)}
        -
        \frac{2}{(1-c_0)^3}\gamma_0
        =
        -\frac{\gamma_0^2+\alpha^2}{\gamma_0(1-c_0)^3}
        =
        -I_\infty .
\]
Moving this upper endpoint contribution \(-I_\infty\) to the right-hand side of \eqref{eq:lemma-full} completes the exact definition of \(I\).
Therefore \eqref{eq:lemma-full} reduces to
$\displaystyle -\frac{T_w}{c_0^2}\hat p_0(0)D_0'(c_0)
        =
        I$.
Equivalently,
$\displaystyle D_0'(c_0)
        =
        -\,\frac{c_0^2}{T_w\hat p_0(0)}\,I$.
If instead the pressure eigenfunction is normalised by the wall condition
\(\hat p_0(0)=1\), the same calculation gives the same equivalence with the
corresponding non-zero normalisation factor.  In either normalisation,
        \(I=0\) if and only if
        \(D_0'(c_0)=0\).
Thus a simple zero of the rigid dispersion relation has \(I\neq0\).
\end{proof}

\begin{theorem}[Admittance sensitivity of a compressible Rayleigh mode]
\label{thm:sens}
Let $(\hat p_0,c_0)$ be a rigid-wall eigenpair of
\eqref{eq:rayleigh}, \eqref{eq:farfield} satisfying $\hat p_0'(0)=0$, under
Assumption~\ref{ass:spectral}, and let the wall condition be perturbed to
\eqref{eq:wallBC}.  Then, for sufficiently small $|A|$,
\begin{equation}
c(A)=c_0+K A+\mathcal O(|A|^2),
\qquad
K=\frac{\ii\alpha\,\hat p_0(0)^2}{c_0\,I},
\label{eq:thm}
\end{equation}
where
\begin{equation}
I=\int_0^\infty\frac{2T}{(U-c_0)^3}
\Big(\hat p_0'^{\,2}+\alpha^2\hat p_0^{\,2}\Big)\,\dd y
\;+\;I_\infty,
\qquad
I_\infty=\frac{2}{(1-c_0)^3}\,
\frac{(\gamma_0^2+\alpha^2)\,\hat p_0(y_b)^2}{2\gamma_0},
\label{eq:Idef}
\end{equation}
the integral being taken over the boundary layer up to $y_b$ and the tail
term $I_\infty$ accounting exactly for the uniform-stream continuation
\eqref{eq:farfield}.  Consequently
\begin{equation}
\delta\sigma=\alpha\,\Imag(KA)+\mathcal O(|A|^2).
\label{eq:dsig}
\end{equation}
\end{theorem}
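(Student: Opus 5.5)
The plan is to work entirely with the scalar dispersion function $D(c,A)$ of \S\ref{ssec:dispersion} and the implicit function theorem already set up there. Under Assumption~\ref{ass:spectral}, $P(0;c)$ and $P'(0;c)$ are analytic near $c_0$. Hence $D(c,A)=P'(0;c)+\ii\alpha cA\,P(0;c)/T_w$ is jointly analytic in $(c,A)$; for a wall law $A=\varepsilon\mathcal A(\alpha,c)$ one substitutes this directly. Simplicity gives $D_c(c_0,0)=D_0'(c_0)\neq0$, so there is a unique analytic branch $c(A)$ with $c(0)=c_0$. Taylor's theorem for analytic functions then gives $c(A)=c_0+c'(0)A+\mathcal O(|A|^2)$, with
\[
c'(0)=-\frac{D_A(c_0,0)}{D_c(c_0,0)}.
\]
So the whole proof reduces to evaluating this quotient.

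The numerator is immediate from the definition of $D$: $D_A(c_0,0)=\ii\alpha c_0\hat p_0(0)/T_w$, using $\hat p_0=P(\cdot;c_0)$. For the denominator I invoke Lemma~\ref{lem:I-Dc}, which gives $D_0'(c_0)=-c_0^2 I/(T_w\hat p_0(0))$. The lemma also guarantees $I\neq0$, and $\hat p_0(0)\neq0$: otherwise $\hat p_0(0)=\hat p_0'(0)=0$, which would force $\hat p_0\equiv0$ by uniqueness for the regular ODE, contradicting $\hat p_0(y_b)=1$. Substituting,
\[
c'(0)=-\frac{\ii\alpha c_0\hat p_0(0)/T_w}{-c_0^2I/(T_w\hat p_0(0))}=\frac{\ii\alpha\hat p_0(0)^2}{c_0 I}=K,
\]
which is \eqref{eq:thm}.

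Two consistency checks remain. First, $K$ is invariant under rescaling of the eigenfunction, since $\hat p_0(0)^2$ and $I$ (including $I_\infty$, which carries the factor $\hat p_0(y_b)^2$) are both quadratic in $\hat p_0$. So the formula holds for any normalisation, not just $\hat p_0(y_b)=1$. Second, the integral over $(0,y_b)$ plus $I_\infty$ equals the full integral over $(0,\infty)$ when the exponential continuation \eqref{eq:farfield} is inserted. Indeed, for $y>y_b$ one has $T=U=1$ and $\hat p_0=\hat p_0(y_b)e^{-\gamma_0(y-y_b)}$, so the integrand contributes
\[
\frac{2}{(1-c_0)^3}(\gamma_0^2+\alpha^2)\,\hat p_0(y_b)^2\int_{y_b}^\infty e^{-2\gamma_0(y-y_b)}\,\dd y,
\]
which is exactly $I_\infty$ because $\Real\gamma_0>0$. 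This confirms the "exact tail" interpretation.

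Finally, for \eqref{eq:dsig}: $\sigma=\alpha c_i=\alpha\Imag c$ with $\alpha$ real, so $\delta\sigma=\alpha\Imag(c(A)-c_0)=\alpha\Imag(KA)+\mathcal O(|A|^2)$. The main obstacle is not algebraic, since Lemma~\ref{lem:I-Dc} carries the Green's-identity work. It is justifying analyticity of $D$ near $c_0$ when a critical layer $U=c_0$ sits on or near the real $y$-axis. There I would rely on the causal contour indentation allowed by Assumption~\ref{ass:spectral}: deform the integration path into the complex $y$-plane so that the coefficients stay analytic in $c$, and note that the Green's identity in the lemma holds along the deformed contour, because all integrands are analytic off the critical point.
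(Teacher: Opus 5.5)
Your proof is correct. It follows the route the paper announces in \S\ref{ssec:dispersion}, namely evaluating $c'(0)=-D_A(c_0,0)/D_c(c_0,0)$, but the paper's written proof of Theorem~\ref{thm:sens} does something different.

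The paper writes $\hat p=\hat p_0+\delta\hat p$ and $c=c_0+\delta c$ on the branch. It forms a Green's identity between the rigid and perturbed eigenfunctions and reads the admittance off the wall boundary term $-B(0)=\ii\alpha A\hat p_0(0)^2/c_0$. It takes the far-field term as $B(y_b)=-\delta c\,I_\infty$ and solves the linearised identity for $\delta c$. Lemma~\ref{lem:I-Dc} is used only at the end, to show $I\neq0$.

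You instead let the Lemma do all the Green's-identity work. You read $D_A(c_0,0)=\ii\alpha c_0\hat p_0(0)/T_w$ off the definition of $D$, take $D_c(c_0,0)$ from the Lemma, and the quotient gives $K$ in one line.

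\emph{What your route gains.} It is rigorous at low cost, because the Lemma's identity is exact. It uses $q=\partial_cP$ with exact endpoint data $q(y_b)=0$ and $q'(y_b)=-\gamma'(c_0)$. As a result:
\begin{itemize}
\item there are no remainders such as $\mathcal O(|\delta c|\,\|\delta\hat p\|)$ to control;
\item no norm or normalisation for $\delta\hat p$ has to be fixed;
\item no separate matching of the perturbed tail at $y_b$ is needed, a step the paper asserts rather than computes;
\item the $\mathcal O(|A|^2)$ error follows directly from analyticity of the implicit-function branch $c(A)$.
\end{itemize}

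\emph{What the paper's route gains.} It displays the boundary-perturbation structure explicitly: the admittance enters only through the wall term. That is the form that generalises to the adjoint bilinear forms of the two-mode problem sketched in \S\ref{sec:discussion}.

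Your remaining points are also correct:
\begin{itemize}
\item the argument that $\hat p_0(0)\neq0$ (otherwise $\hat p_0\equiv0$ by uniqueness, since the ODE is regular at the wall);
\item the invariance of $K$ under rescaling of $\hat p_0$, since both $\hat p_0(0)^2$ and $I$ are quadratic in $\hat p_0$;
\item the evaluation of the tail integral, which reproduces the paper's ``more invariant'' interpretation of $I_\infty$.
\end{itemize}
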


\begin{proof}
By \S\ref{ssec:dispersion}, the rigid-wall eigenvalue is a simple zero of
the analytic dispersion function,
\[
        D(c_0,0)=0,\qquad D_c(c_0,0)\ne0 .
\]
Hence the implicit function theorem gives a unique analytic eigenvalue
branch \(c=c(A)\), for \(|A|\) sufficiently small, satisfying
\(D(c(A),A)=0\) and \(c(0)=c_0\).
It remains to compute the first derivative of this branch at \(A=0\).
We do this by Green's identity, in a form independent of the normalisation
of the eigenfunction.

Let \(\hat p=\hat p_0+\delta\hat p\) and
        \(c=c_0+\delta c\),
where \((\hat p,c)\) lies on the analytic branch.  Write
\[
        \Phi_0(y)=\Phi(y;c_0),\qquad
        \Phi(y)=\Phi(y;c)=\frac{T(y)}{(U(y)-c)^2}.
\]
The rigid and perturbed eigenfunctions satisfy
\begin{equation}
(\Phi_0\hat p_0')'=\alpha^2(\Phi_0-M^2)\hat p_0,
\qquad
(\Phi\hat p')'=\alpha^2(\Phi-M^2)\hat p .
\label{eq:pair}
\end{equation}
Multiply the first equation in \eqref{eq:pair} by \(\hat p\), the second by
\(\hat p_0\), subtract, and integrate from \(0\) to \(y_b\).  Expanding the integrand, keeping terms up to first order in the perturbations \(\delta c\) and \(\delta\hat p\), and carefully retaining all first-order boundary variations gives
\begin{equation}
\Big[
\Phi_0
\big(
\hat p_0\,\delta\hat p'
-
\hat p_0'\,\delta\hat p
\big)
+
\delta\Phi\,\hat p_0\,\hat p_0'
\Big]_{0}^{y_b}
=
\delta c
\int_0^{y_b}
\pd{\Phi}{c}\Big|_{c_0}
\left(
\hat p_0'^{\,2}
+
\alpha^2\hat p_0^{\,2}
\right)\dd y
+
\mathcal O(|\delta c|^2+|\delta c|\,\|\delta\hat p\|).
\label{eq:identity}
\end{equation}
Here \(\displaystyle  \delta\Phi = \delta c \pd{\Phi}{c}\big|_{c_0} + \mathcal O(|\delta c|^2)\) with
$\displaystyle \pd{\Phi}{c}\Big|_{c_0}=
\frac{2T}{(U-c_0)^3}$,
and products such as \(\hat p\hat p_0\) and \(\hat p'\hat p_0'\) have been
replaced by \(\hat p_0^2\) and \(\hat p_0'^{\,2}\) to first order.

We now evaluate the boundary terms.  Set
\[
        B(y)
        =
        \Phi_0
        \big(
        \hat p_0\,\delta\hat p'
        -
        \hat p_0'\,\delta\hat p
        \big)
        +
        \delta\Phi\,\hat p_0\,\hat p_0',
\]
so that the left-hand side of \eqref{eq:identity} is \(B(y_b)-B(0)\).
At the wall, \(\hat p_0'(0)=0\).  The perturbed wall condition
\eqref{eq:wallBC} gives
\[
        \hat p'(0)
        =
        -\frac{\ii\alpha cA}{T_w}\hat p(0).
\]
Since \(\hat p'(0) = \hat p_0'(0) + \delta\hat p'(0)\), \(\hat p(0) = \hat p_0(0) + \delta\hat p(0)\), and \(c = c_0 + \delta c\), expanding the boundary condition yields
\[
        \delta\hat p'(0)
        =
        -\frac{\ii\alpha c_0A}{T_w}\hat p_0(0)
        +
        \mathcal O(|A|\,|\delta c| + |A|\,|\delta\hat p(0)|).
\]
Consequently, noting that the \(\delta\Phi\) term in \(B(0)\) vanishes identically because \(\hat p_0'(0)=0\), we have
\[
        -B(0)
        =
        -\Phi_0(0)\big(\hat p_0(0)\delta\hat p'(0) - \hat p_0'(0)\delta\hat p(0)\big)
        =
        \frac{\ii\alpha A}{c_0}\hat p_0(0)^2
        +
        \mathcal O(|A|\,|\delta c| + |A|\,|\delta\hat p(0)|),
\]
because \(U(0)=0\) and hence \(\Phi_0(0)=T_w/c_0^2\).

It remains to account for the upper endpoint.  At \(y=y_b\) the solution is
matched to the uniform-stream tail
\[
        \hat p_0(y)=\hat p_0(y_b)\exp[-\gamma_0(y-y_b)],
        \qquad
        \gamma_0=\alpha[1-M^2(1-c_0)^2]^{1/2},
        \qquad \Real\gamma_0>0 .
\]
The variation of the profile \(\Phi(y;c)\) and the decay exponent \(\gamma(c)\) contributes to the boundary term at \(y_b\). Explicit evaluation of \(B(y_b)\) by matching with the perturbed tail exactly yields \(B(y_b)=-\delta c\,I_\infty+\mathcal O(|\delta c|^2)\). Equivalently, and more invariantly, this total boundary contribution is identical to the closed-form continuation of the integral in \eqref{eq:identity} over the uniform stream \(y>y_b\).  Since \(U=T=1\) there,
\[
        \pd{\Phi}{c}\Big|_{c_0}
        =
        \frac{2}{(1-c_0)^3},
        \qquad
        \hat p_0'^{\,2}+\alpha^2\hat p_0^{\,2}
        =
        (\gamma_0^2+\alpha^2)\hat p_0(y_b)^2
        \exp[-2\gamma_0(y-y_b)] .
\]
Thus
\[
\int_{y_b}^{\infty}
\pd{\Phi}{c}\Big|_{c_0}
\left(
\hat p_0'^{\,2}
+
\alpha^2\hat p_0^{\,2}
\right)\dd y
=
\frac{2}{(1-c_0)^3}
\frac{(\gamma_0^2+\alpha^2)\hat p_0(y_b)^2}{2\gamma_0}
=
I_\infty .
\]
Substituting the wall contribution \(B(0)\) and the tail contribution \(B(y_b)\) into
\eqref{eq:identity} and moving the \(I_\infty\) term to the right-hand side gives
\[
        \frac{\ii\alpha A}{c_0}\hat p_0(0)^2
        =
        \delta c
        \left[
        \int_0^{y_b}
        \frac{2T}{(U-c_0)^3}
        \left(
        \hat p_0'^{\,2}
        +
        \alpha^2\hat p_0^{\,2}
        \right)\dd y
        +
        I_\infty
        \right]
        +
        \mathcal O(|\delta c|^2+|\delta c|\,\|\delta\hat p\|+|A|\,|\delta c|+|A|\,|\delta\hat p(0)|).
\]
By the analytic branch construction, both \(\delta c=\mathcal O(A)\) and \(\delta\hat p=\mathcal O(A)\).  Therefore the entire remainder reduces strictly to \(\mathcal O(|A|^2)\), and
\[
        \delta c
        =
        \frac{\ii\alpha A\hat p_0(0)^2}{c_0 I}
        +
        \mathcal O(|A|^2),
\]
where \(I\) is defined by \eqref{eq:Idef}.  Hence
\[
        c(A)=c_0+
        \frac{\ii\alpha\hat p_0(0)^2}{c_0 I}\,A
        +
        \mathcal O(|A|^2).
\]

Finally, \(I\ne0\).  Indeed, in the shooting formulation of
\S\ref{ssec:dispersion}, the same Green identity identifies the denominator
\(I\), up to the non-zero normalisation factor fixed by the shooting
solution, with the derivative \(D_c(c_0,0)\) of the rigid-wall dispersion
function.  Since the eigenvalue is simple, \(D_c(c_0,0)\ne0\), and therefore
\(I\ne0\).  Taking imaginary parts and multiplying by \(\alpha\) gives
\(\delta\sigma
        =
        \alpha\,\Imag(KA)
        +
        \mathcal O(|A|^2)\),
which proves \eqref{eq:dsig}.
\end{proof}

\begin{corollary}[Phase criterion]
\label{cor:phase}
Write $A=|A|\ee^{\ii\varphi}$ and $K=|K|\ee^{\ii\varphi_K}$.  Then, to first
order, $\delta\sigma=\alpha|K||A|\sin(\varphi+\varphi_K)$: the admittance
stabilises the mode if and only if $\Imag(K\ee^{\ii\varphi})<0$, i.e.\ if
and only if its phase lies in the open half-plane
$\varphi\in(-\pi-\varphi_K,\,-\varphi_K)$ (mod $2\pi$).
\end{corollary}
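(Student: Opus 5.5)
The corollary follows directly from the first-order growth-rate formula \eqref{eq:dsig} of Theorem~\ref{thm:sens}, so the plan is purely algebraic. I will substitute the polar forms $A=|A|\ee^{\ii\varphi}$ and $K=|K|\ee^{\ii\varphi_K}$ into $\delta\sigma=\alpha\,\Imag(KA)+\mathcal O(|A|^2)$. Since $KA=|K||A|\ee^{\ii(\varphi+\varphi_K)}$, this gives
\[
\delta\sigma=\alpha|K||A|\sin(\varphi+\varphi_K)+\mathcal O(|A|^2).
\]
That is the first claim. It also shows that $\Imag(KA)=|A|\,\Imag(K\ee^{\ii\varphi})$, so the sign of the leading term equals the sign of $\Imag(K\ee^{\ii\varphi})$, because $\alpha>0$ and $|A|>0$.

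Next I will interpret ``stabilises'' in the first-order sense, meaning that $\delta\sigma<0$ for all sufficiently small $|A|$ at fixed phase $\varphi$. I assume $K\neq0$, which holds whenever $\hat p_0(0)\ne0$. By Lemma~\ref{lem:I-Dc} we have $I\neq0$, so $K$ is finite.

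For fixed $\varphi$ with $\sin(\varphi+\varphi_K)\neq0$, the linear term $\alpha|K||A|\sin(\varphi+\varphi_K)$ dominates the $\mathcal O(|A|^2)$ remainder as $|A|\to0$. Hence $\delta\sigma<0$ for small $|A|$ if and only if $\sin(\varphi+\varphi_K)<0$, i.e.\ if and only if $\Imag(K\ee^{\ii\varphi})<0$.

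The boundary phases, where $\sin(\varphi+\varphi_K)=0$, are first-order neutral. They are excluded from the open half-plane, which is consistent with the strict inequality in the statement.

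Finally, I will solve $\sin\theta<0$ with $\theta=\varphi+\varphi_K$. Its solution set is $\theta\in(-\pi,0)$ mod $2\pi$, which gives $\varphi\in(-\pi-\varphi_K,-\varphi_K)$ mod $2\pi$.

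The only delicate point is the uniformity of the remainder. The $\mathcal O(|A|^2)$ bound from Theorem~\ref{thm:sens} comes from the analytic branch $c(A)$, so it is uniform in $\varphi$ on a small disc. This uniformity is what lets the linear term govern the sign at each fixed phase away from the two neutral directions.
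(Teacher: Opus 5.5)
Your proposal is correct and is the same argument the paper relies on: substitute the polar forms into $\delta\sigma=\alpha\Imag(KA)+\mathcal O(|A|^2)$ from Theorem~\ref{thm:sens} and read off the sign of $\sin(\varphi+\varphi_K)$; your extra care about the remainder and the neutral rays is sound. You also need not assume $K\neq0$: it holds automatically, because $\hat p_0(0)=\hat p_0'(0)=0$ would force $\hat p_0\equiv0$, the wall being a regular point of the Rayleigh equation when $c_0\neq0$.
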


As demonstrated in Figure~\ref{fig:spectral_sensitivity}, the spectral sensitivity theorem elegantly decouples the outer mode physics from the inner wall mechanisms (Section~\ref{sec:mechanisms}) via a simple inner product.
Proposition~\ref{prop:phase-geometry} rewrites this condition in Cartesian
form and gives the special cases used below for resistive, positive-reactive
and negative-reactive wall responses.

\begin{figure}[htbp]
    \centering
    \includegraphics[width=0.9\textwidth]{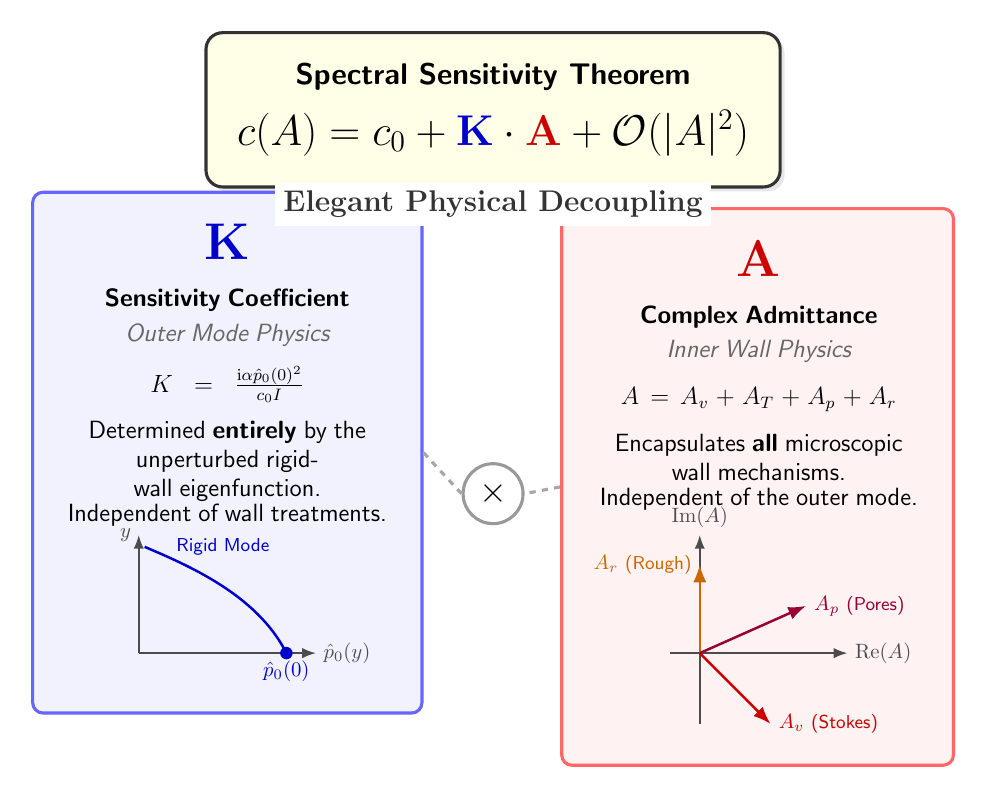} 
    \caption{Visual breakdown of the spectral sensitivity theorem, illustrating the decoupling between the sensitivity coefficient (outer mode physics) and the complex admittance (inner wall physics).}
    \label{fig:spectral_sensitivity}
\end{figure}

\begin{proposition}[Geometry of the stabilising phase band]
\label{prop:phase-geometry}
Let \(K=K_R+\ii K_I\neq0\), and let
$A=A_R+\ii A_I=|A|\ee^{\ii\varphi}$.
To first order in \(|A|\),
$        \delta\sigma
        =
        \alpha\Imag(KA)
        =
        \alpha(K_R A_I+K_I A_R)$.
Hence the stabilising admittances form the open half-plane
\begin{equation}
        K_R A_I+K_I A_R<0 .
\label{eq:stab-half-plane}
\end{equation}
In particular:
\[
        \text{purely resistive walls }(A>0)
        \text{ stabilise if and only if } K_I<0,
\]
\[
        \text{positive reactance }(A=\ii a,\ a>0)
        \text{ stabilises if and only if } K_R<0,
\]
and
\[
        \text{negative reactance }(A=-\ii a,\ a>0)
        \text{ stabilises if and only if } K_R>0.
\]
For passive walls, \(\Real A\ge0\), or equivalently
\(\varphi\in[-\pi/2,\pi/2]\), the passive sector has angular width \(\pi\).
Therefore, except in the degenerate case in which one boundary ray is
neutral, no open stabilising half-plane can contain the entire passive
sector.  Passive walls are consequently not generically stabilising: a
passive wall may damp or amplify a given mode depending on the phase of its
admittance relative to \(K\).
\end{proposition}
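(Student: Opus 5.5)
The plan is to start from the first-order growth-rate law \eqref{eq:dsig} of Theorem~\ref{thm:sens} and expand it in Cartesian components. With \(K=K_R+\ii K_I\) and \(A=A_R+\ii A_I\), the product is
\[
KA=(K_RA_R-K_IA_I)+\ii(K_RA_I+K_IA_R).
\]
Hence \(\delta\sigma=\alpha(K_RA_I+K_IA_R)+\mathcal O(|A|^2)\).

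Stabilisation at first order means \(\delta\sigma<0\). Since \(\alpha>0\) for the temporal modes under consideration, this is exactly the condition \eqref{eq:stab-half-plane}. Because \(K\neq0\), the linear form \((A_R,A_I)\mapsto K_IA_R+K_RA_I\) is nonzero, so its negativity set is a genuine open half-plane through the origin. Its boundary is the line \(\Imag(KA)=0\), and its inward normal is \(-(K_I,K_R)\). This agrees with Corollary~\ref{cor:phase}, as one checks by writing \(\Imag(KA)=|K||A|\sin(\varphi+\varphi_K)\).

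The three special cases follow by substitution:
\begin{itemize}
\item \(A=a>0\) gives \(\delta\sigma=\alpha aK_I\), so the wall stabilises if and only if \(K_I<0\).
\item \(A=\ii a\) gives \(\alpha aK_R\), so the wall stabilises if and only if \(K_R<0\).
\item \(A=-\ii a\) gives \(-\alpha aK_R\), so the wall stabilises if and only if \(K_R>0\).
\end{itemize}

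For the passive-sector claim, the passive set \(\{\Real A\ge0\}\) is a closed half-plane, or the sector \(\varphi\in[-\pi/2,\pi/2]\) of angular width \(\pi\). The stabilising set is the open half-plane \(\{\sin(\varphi+\varphi_K)<0\}\), that is, \(\varphi\in(-\pi-\varphi_K,-\varphi_K)\), an open arc of width exactly \(\pi\). A closed arc of length \(\pi\) cannot fit inside an open arc of length \(\pi\). So for every \(K\neq0\) there is a passive direction that is not stabilising at first order. In the generic case the two arcs do not share an endpoint, and the passive arc then contains directions with \(\sin(\varphi+\varphi_K)>0\). Those directions are genuinely destabilising, and moving along the arc from stabilising to destabilising directions shows that passive walls can do either.

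The degenerate case occurs when the boundary of the stabilising half-plane coincides with the imaginary axis, which happens exactly when \(K_R=0\). Then the passive sector lies in the closure of the stabilising half-plane (if \(K_I<0\)), and only the boundary rays \(A=\pm\ii a\) are neutral at first order. This is the exception the proposition explicitly excludes.

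There is no real analytic obstacle. The only care needed is bookkeeping: keep every statement at first order in \(|A|\), with the \(\mathcal O(|A|^2)\) remainder from Theorem~\ref{thm:sens}, and treat the closed-versus-open arc comparison precisely so that the neutral boundary rays are identified correctly.
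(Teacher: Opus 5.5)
Your proof is correct and follows the paper's route: expand $\Imag(KA)$ in Cartesian components, substitute to obtain the three special cases, and note that a closed half-plane cannot lie inside an open half-plane of the same angular width. You also pin down the degenerate case more precisely than the paper does. It is $K_R=0$, and there both imaginary-axis rays $A=\pm\ii a$ are neutral, not just one boundary ray as the paper's wording suggests.
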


\begin{proof}
The first identity follows directly from Corollary~\ref{cor:phase}:
\[
        \Imag(KA)
        =
        \Imag\!\left[(K_R+\ii K_I)(A_R+\ii A_I)\right]
        =
        K_R A_I+K_I A_R .
\]
The stabilising set is therefore the open half-plane
\eqref{eq:stab-half-plane}.  If \(A>0\) is purely resistive, then
$\Imag(KA)=A K_I$,
so stabilisation is equivalent to \(K_I<0\).  If \(A=\ii a\), \(a>0\), then
$\Imag(KA)=aK_R$,
so positive reactance stabilises if and only if \(K_R<0\).  If
\(A=-\ii a\), \(a>0\), then
$\Imag(KA)=-aK_R$,
so negative reactance stabilises if and only if \(K_R>0\).

Finally, the passive sector \(\Real A\ge0\) is the closed right half-plane
in the \(A\)-plane and has angular width \(\pi\).  The stabilising set is
also a half-plane through the origin.  An open half-plane cannot strictly
contain another closed half-plane of the same angular width; at most their
boundary rays can coincide, in which case one passive boundary direction is
neutral.  Thus passivity alone does not imply stabilisation.
\end{proof} 

As illustrated in Figure~\ref{fig:admittance_plane}, the complex admittance plane is divided into stabilizing and destabilizing regions by a neutral boundary, demonstrating how purely passive or reactive mechanisms can alter mode stability based on their phase.

\begin{figure}[htbp]
    \centering
    \includegraphics[width=0.85\textwidth]{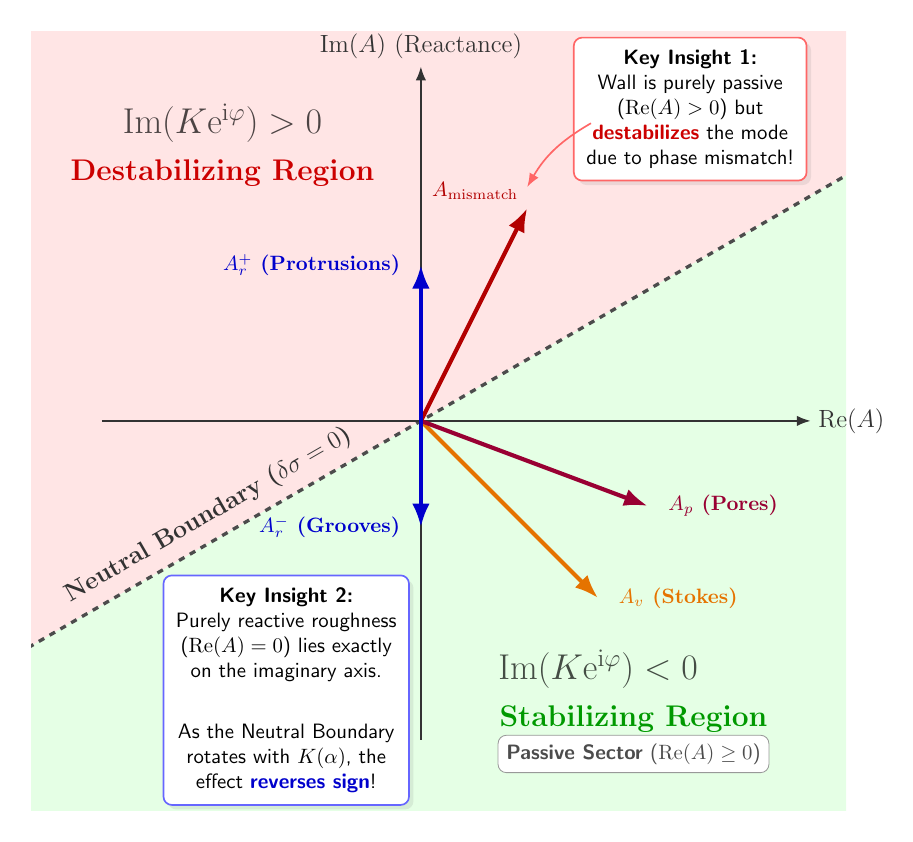} 
    \caption{Schematic of the complex admittance plane highlighting the stabilizing and destabilizing regions separated by the neutral boundary, along with the phase effects of various wall mechanisms.}
    \label{fig:admittance_plane}
\end{figure}

\begin{remark}[Locality and near-degeneracy]
\label{rem:local}
The expansion \eqref{eq:thm} is local: it tracks the simple eigenvalue
branch through \(c_0\) and says nothing about other modes elsewhere in the
spectrum.  Both the constant in the \(\mathcal O(|A|^2)\) remainder and the radius of
validity degenerate as the eigenvalue approaches another branch.  Near the
synchronisation region of the Mach-4.5 flow
(\(\alpha\approx0.26\), \S\ref{ssec:band}), where two branches nearly
coalesce, \(|K|\) is observed to spike and the scalar sensitivity formula
should be interpreted only as an indicator of the loss of simple-eigenvalue
regularity.  Numerical values within a small neighbourhood of this region
are shown only to indicate this breakdown; they are not used as validation
of Theorem~\ref{thm:sens}.
\end{remark}

\begin{remark}[Use at $\mathcal O(1)$ admittance]
\label{rem:O1}
Real coatings, and the viscous wall layer at moderate Reynolds number, have
$|A|=\mathcal O(1)$.  There the formula \eqref{eq:thm} is \emph{not} invoked as an
asymptotic approximation: it serves as a phase diagnostic and as an
a-posteriori comparison, while all quantitative results are obtained from
the full eigenvalue problem retaining \eqref{eq:wallBC} exactly
(\S\ref{sec:numerics}).
\end{remark}

\section{Effective admittances of thin wall mechanisms}
\label{sec:mechanisms}

\begin{assumption}[Thin wall-layer reduction]
\label{ass:thin}
Each wall mechanism occupies a region of thickness $\delta_m$ satisfying
\begin{equation}
\delta_m\ll \delta_{BL}=\mathcal O(1),
\qquad
\alpha\,\delta_m\ll1 ,
\label{eq:ordering}
\end{equation}
so that the disturbance pressure is uniform across the sublayer to leading
order, $\hat p\simeq\hat p_w$, and the sublayer returns to the outer flow a
wall-normal flux proportional to this common forcing.  Cross-interactions
between distinct mechanisms --- orifice end corrections, modification of the
Stokes layer by coating transpiration, roughness-induced local separation
--- are higher order in the ratios of the layer scales or quadratic in the
individual admittances, so the admittances of coexisting mechanisms are
additive at leading order:
\begin{equation}
A=A_v+A_T+A_p+A_r+\mathcal O(\textnormal{products}).
\label{eq:additive}
\end{equation}
\end{assumption}

\begin{proposition}[Thin-layer admittance reduction]
\label{thm:thin-admittance}
Let a wall mechanism occupy an inner region of dimensional thickness
\(\ell_m\), and write
\[
        \delta_m=\frac{\ell_m}{L^*}\ll1,
        \qquad
        \alpha\delta_m\ll1 .
\]
Introduce the stretched normal coordinate \(Y=y/\delta_m\).  Suppose that,
after nondimensionalisation, the inner disturbance problem has the linear
form
$\mathcal L_m(\alpha,c;\delta_m)\,{\bf U}_m
        =
        \mathcal F_m(\alpha,c;\delta_m)\,\hat p_w$,
where \(\hat p_w=\hat p(0^+)\) is the outer wall pressure, and that the
solution defines an outward inner flux
$q_m=\mathcal Q_m{\bf U}_m$.
Assume that the inner problem is well posed uniformly for \(c\) in the
neighbourhood considered in Assumption~\ref{ass:spectral}, and that its flux
admits the expansion
\begin{equation}
        q_m
        =
        -A_m(\alpha,c)\hat p_w
        +
        \mathcal O\!\left(
        \alpha\delta_m\,|\hat p_w|
        +
        \delta_m\,|\hat p_y(0^+)|
        \right).
\label{eq:inner-flux-law}
\end{equation}
Then the outer problem satisfies the effective admittance condition
\begin{equation}
        \hat v(0^+)
        =
        -A_m(\alpha,c)\hat p(0^+)
        +
        \mathcal O\!\left(
        \alpha\delta_m\,|\hat p(0^+)|
        +
        \delta_m\,|\hat p_y(0^+)|
        \right).
\label{eq:effective-single}
\end{equation}
Equivalently, if the remainder is absorbed into the boundary condition, the
wall admittance is
\[
        A=A_m+\mathcal O(\alpha\delta_m+\delta_m\,|\hat p_y(0^+)|/|\hat p(0^+)|).
\]

If \(N\) independent thin mechanisms are forced by the same leading-order
outer pressure and their fluxes are additive, then
\begin{equation}
        \hat v(0^+)
        =
        -\left(\sum_{m=1}^N A_m\right)\hat p(0^+)
        +
        \mathcal O\!\left(
        \sum_{m=1}^N
        \alpha\delta_m\,|\hat p(0^+)|
        +
        \sum_{m=1}^N
        \delta_m\,|\hat p_y(0^+)|
        +
        \sum_{m\ne n}|A_mA_n|\,|\hat p(0^+)|
        \right),
\label{eq:effective-many}
\end{equation}
and therefore
\begin{equation}
        A
        =
        \sum_{m=1}^N A_m
        +
        \mathcal O\!\left(
        \sum_{m=1}^N\alpha\delta_m
        +
        \sum_{m=1}^N\delta_m\,|\hat p_y(0^+)|/|\hat p(0^+)|
        +
        \sum_{m\ne n}|A_mA_n|
        \right).
\label{eq:admittance-additive-estimate}
\end{equation}
\end{proposition}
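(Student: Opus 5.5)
The argument is a matching argument across the thin layer. The inner solution's outward flux is identified with the outer wall-normal velocity at the bottom of the outer region. The two remainder sources — pressure variation across the layer and the finite extent of the layer — are then bounded by Taylor expansion of the outer solution.

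\emph{Step 1: matching condition.} In the stretched coordinate \(Y=y/\delta_m\), the intermediate-matching principle requires the inner normal velocity as \(Y\to\infty\) to coincide with the outer normal velocity as \(y\to0^+\). We define \(q_m\) so that it is exactly this limit, i.e.\ \(\hat v(0^+)=q_m\) up to the mismatch from evaluating the outer solution at \(y=\delta_m\) rather than at \(0^+\). By Taylor's theorem this mismatch is \(\delta_m\hat v_y(0^+)\). Continuity, \(\hat v_y=-\ii\alpha\hat u+\dots\), bounds \(\hat v_y\) by \(\alpha\) times the pressure-scale quantities. The \(y\)-momentum relation \(\ii\alpha\bar\rho(U-c)\hat v=-\hat p'\) converts the normal-velocity variation into \(\delta_m|\hat p_y(0^+)|\) terms. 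Together these give a remainder within the \(\mathcal O(\alpha\delta_m|\hat p_w|+\delta_m|\hat p_y|)\) class.

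\emph{Step 2: pressure forcing.} The inner forcing is \(\hat p_w=\hat p(0^+)\). The actual pressure across the layer differs from \(\hat p_w\) by at most \(\delta_m\sup|\hat p_y|\). Because the inner problem is linear and uniformly well posed in \(c\) (Assumption~\ref{ass:spectral} neighbourhood), the inner operator has a bounded inverse. The resulting flux perturbation is therefore \(\mathcal O(\delta_m|\hat p_y(0^+)|)\), with a constant uniform in \(c\). Substituting \eqref{eq:inner-flux-law} into Step 1 and combining remainders gives \eqref{eq:effective-single}. Dividing by \(\hat p(0^+)\neq0\), as is the case for the trapped modes of \S\ref{sec:theorem} where \(\hat p_0(0)\neq0\), gives the admittance form.

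\emph{Step 3: several mechanisms.} Superposition of the inner problems, each forced by the common \(\hat p_w\), makes the total flux \(\sum q_m\). Applying Steps 1–2 to each term yields the summed \(\alpha\delta_m\) and \(\delta_m|\hat p_y|\) remainders. The coupling between mechanisms needs a separate argument. Mechanism \(n\)'s flux \(-A_n\hat p_w\) alters the forcing seen by mechanism \(m\), through a modified pressure or transpiration boundary datum. This alteration is \(\mathcal O(|A_n||\hat p_w|)\), and through the linear response of mechanism \(m\) it feeds back as \(\mathcal O(|A_mA_n||\hat p_w|)\). This produces the cross terms in \eqref{eq:effective-many} and \eqref{eq:admittance-additive-estimate}.

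\emph{Main obstacle.} Step 3 is the hard part: justifying that the interaction is genuinely quadratic. I would formalise it as a block system in which the off-diagonal coupling operators are bounded by the individual admittances. I would then use a Neumann-series expansion, invoking the stated additivity hypothesis, to discard terms beyond the first cross order.
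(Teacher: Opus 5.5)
Your proposal is correct and follows essentially the same route as the paper. Matching identifies $\hat v(0^+)$ with the inner flux $q_m$; the Taylor expansion of the outer pressure across the layer makes the inner forcing the constant $\hat p_w$ up to $\mathcal O(\alpha\delta_m|\hat p_w|+\delta_m|\hat p_y(0^+)|)$ corrections; and conservation of mass sums the fluxes of the separate mechanisms.

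The one difference is that the paper does not attempt your Neumann-series treatment of the cross terms. Under the stated hypothesis that the fluxes are additive, the $|A_mA_n|$ terms are only extra slack in the remainder, which the paper simply labels as the first neglected interactions, so the step you call the ``main obstacle'' is not actually needed for the statement as posed.
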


\begin{proof}
The outer pressure has the Taylor expansion, in the inner coordinate
\(Y=y/\delta_m\),
\[
        \hat p(\delta_m Y)
        =
        \hat p(0^+)
        +
        \delta_m Y\,\hat p_y(0^+)
        +
        \mathcal O(\delta_m^2Y^2).
\]
Moreover, the streamwise phase factor varies across the layer by
\[
        \ee^{\ii\alpha x}
        \quad\hbox{with normal correction of size}\quad
        \mathcal O(\alpha\delta_m),
\]
so the leading forcing of the inner problem is the constant pressure
\(\hat p_w=\hat p(0^+)\).  By the assumed uniform well-posedness and the
linear flux law \eqref{eq:inner-flux-law}, the inner solution returns the
outward flux
\[
        q_m
        =
        -A_m(\alpha,c)\hat p(0^+)
        +
        \mathcal O\!\left(
        \alpha\delta_m\,|\hat p(0^+)|
        +
        \delta_m\,|\hat p_y(0^+)|
        \right).
\]
Matching of normal velocity between the inner and outer regions gives
$\hat v(0^+)=q_m$,
which proves \eqref{eq:effective-single}.  For several mechanisms occupying
asymptotically separated or independently homogenised thin regions, the
leading pressure forcing is common to all of them, while conservation of
mass makes the returned normal flux the sum of the individual fluxes:
\[
        \hat v(0^+)=\sum_{m=1}^N q_m .
\]
Adding the single-mechanism expansions gives the first two error terms in
\eqref{eq:effective-many}.  The product terms \(A_mA_n\) represent the first
neglected interactions: for example, modification of one sublayer by the
transpiration induced by another, or the displacement of one effective
boundary by another.  These interactions are beyond the leading matching
order retained here.  This gives \eqref{eq:effective-many} and hence
\eqref{eq:admittance-additive-estimate}.
\end{proof}

Proposition~\ref{thm:thin-admittance} formalises the pressure-driven,
velocity-returned matching structure used below.  The rest of this section
computes the leading coefficients \(A_m\) for the particular mechanisms of
interest. 
As illustrated in Figure~\ref{fig:linear_additivity}, the linear additivity of different wall mechanisms allows engineers to selectively combine treatments, effectively steering the total admittance vector into the stabilizing region.
Figure~\ref{fig:unified_encapsulation} further shows that various microscopic mechanisms at the wall, including roughness, viscous/thermal sublayers, and porous coatings, can be encapsulated into a unified macroscopic admittance interface.

\begin{figure}[htbp]
    \centering
    \includegraphics[width=0.85\textwidth]{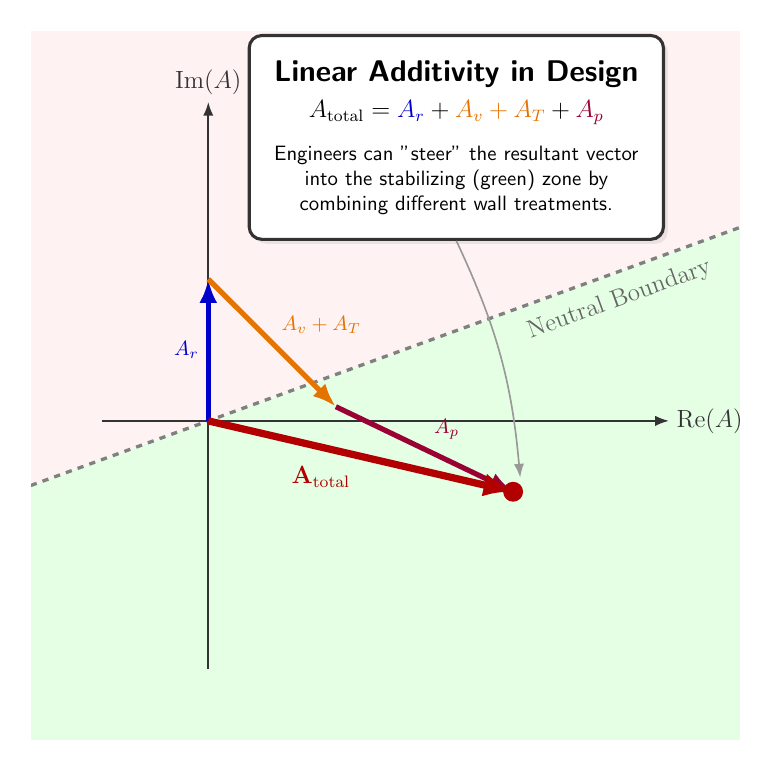} 
    \caption{Vector addition of individual admittance components in the complex plane, demonstrating how combined wall treatments can achieve a stabilizing total effective admittance.}
    \label{fig:linear_additivity}
\end{figure}

\begin{figure}[htbp]
    \centering
    \includegraphics[width=0.85\textwidth]{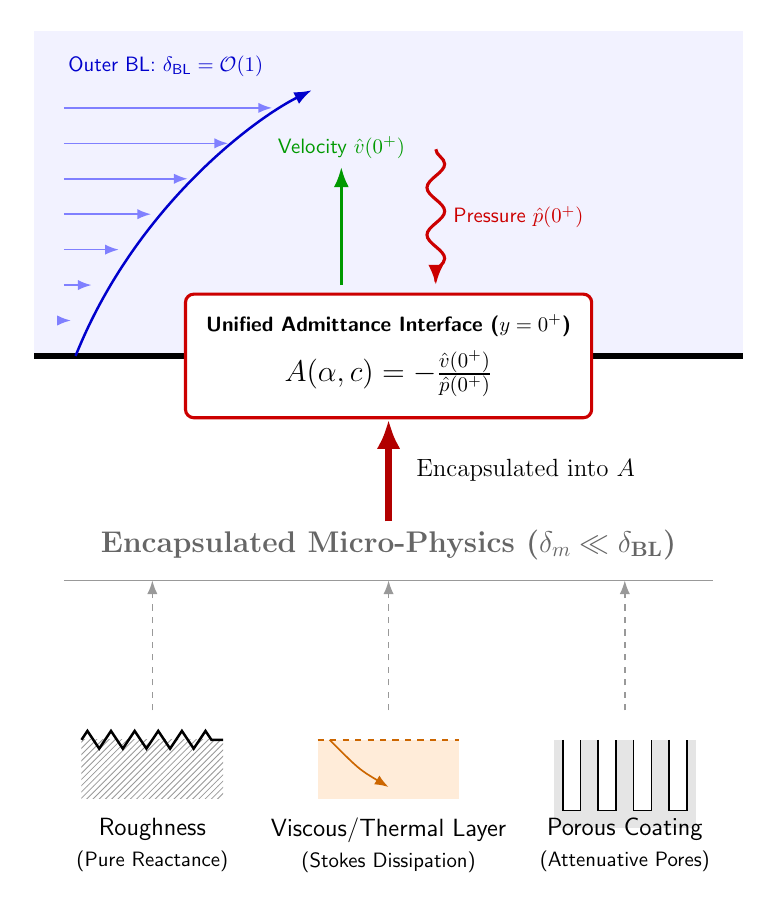} 
    \caption{Conceptual diagram illustrating the encapsulation of underlying micro-physics into a unified effective admittance interface for the outer boundary layer.}
    \label{fig:unified_encapsulation}
\end{figure}

\subsection{The viscous Stokes and thermal sublayers}
\label{ssec:stokes}

For an inviscidly controlled mode of frequency $\omega=\alpha c$, viscosity
acts at the wall in a Stokes layer of thickness
$\delta_v=(\nu_w/\omega)^{1/2}=T_w(\omega\Rey)^{-1/2}$, with
$\nu_w=T_w^2/\Rey$ under the Chapman law; the ordering
\eqref{eq:ordering} holds with $\delta_m=\delta_v$.  Provided
$\omega\gg\alpha\lambda_w\delta_v$ --- comfortably satisfied for the second
mode; the uniformly valid Airy-function (Tietjens) form, and its
triple-deck connection in the opposite limit, are given in
Appendix~\ref{app:airy} --- the base shear may be neglected across the
layer. The Stokes form is the high-frequency limit of the shear-affected wall
layer.  To see the ordering explicitly, compare the change of base velocity
across the Stokes layer with the phase speed:
\[
        \epsilon_A
        =
        \frac{\alpha\lambda_w\delta_v}{\omega}
        =
        \frac{\lambda_w\delta_v}{c},
        \qquad
        \delta_v=\frac{T_w}{(\alpha c\Rey)^{1/2}},
        \qquad
        \omega=\alpha c .
\]
Thus $\displaystyle \epsilon_A
        =
        \frac{\lambda_w T_w}{c(\alpha c\Rey)^{1/2}}$.
When \(|\epsilon_A|\ll1\), the base shear is a perturbation across the
Stokes layer and the uniformly valid Airy-function admittance of
Appendix~\ref{app:airy} reduces to
$A_v^{\rm Airy}
        =
        A_v\left(1+\mathcal O(\epsilon_A)\right)$.
At the reference second-mode point
\[
        \alpha=0.31,\qquad
        c_0=0.9082+0.0295\ii,\qquad
        \Rey=2000,\qquad
        \lambda_w=0.0930,\qquad
        T_w=5.05,
\]
we obtain
\[
        |\epsilon_A|
        =
        \left|
        \frac{\lambda_w T_w}
        {c_0(\alpha c_0\Rey)^{1/2}}
        \right|
        \simeq 2.18\times10^{-2}.
\]
The use of the Stokes admittance \eqref{eq:Av} for the second mode therefore
has an expected shear-correction error of only a few per cent.  For lower
frequencies or near the first-mode lower branch, where \(|\epsilon_A|\) is
not small, the Airy admittance \eqref{eq:AvAiry} should be used instead.

The outer solution possesses the wall slip velocity
$\hat u_s=T_w\hat p_w/c$ (from the inviscid $x$-momentum equation at
$y\to0$); the Stokes-layer solution satisfying no slip and matching it is
$\hat u=\hat u_s(1-\ee^{-my})$ with
$m=(\omega/\nu_w)^{1/2}\ee^{-\ii\pi/4}$, and continuity converts the
mass-flux deficit into an outward velocity
$\hat v(0^+)=\ii\alpha\hat u_s/m$, whence
\begin{equation}
A_v=\ee^{-\ii\pi/4}\,T_w^{2}\left(\frac{\alpha}{\Rey c^{3}}\right)^{1/2}.
\label{eq:Av}
\end{equation}
Independently, the outer flow carries an isentropic temperature fluctuation
$\hat\theta_s=(\gamma-1)M^2T_w\hat p_w$ at the wall, whereas the solid wall
is isothermal to fluctuations; the resulting thermal layer of decay rate
$m\sqrt{Pr}$ stores and releases mass through the equation of state, giving
\begin{equation}
A_T=\ee^{-\ii\pi/4}\,\frac{(\gamma-1)M^2T_w}{\sqrt{Pr}}
\left(\frac{\alpha c}{\Rey}\right)^{1/2}.
\label{eq:AT}
\end{equation}
Both carry the Stokes phase $-45^\circ$; their passivity follows from the
inner dissipation identities in Proposition~\ref{prop:passivity}.  Although
asymptotically $\mathcal O(\Rey^{-1/2})$, their numerical prefactor is large at high
Mach number because of the factors $T_w^2$ and $(\gamma-1)M^2T_w$: at
$\Rey=2000$, $\alpha=0.31$, $c=c_0$ we find $A_v+A_T=0.66-0.66\ii$ ---
\emph{not} a small perturbation, and treated accordingly
(Remark~\ref{rem:O1}).  
Viscosity, in this description, is simply another
wall impedance, whose phase locates it in the admittance plane alongside
coatings and roughness (figure~\ref{fig:1}b).

\subsection{Porous coatings of blind pores}
\label{ssec:porous}

Consider a coating of blind cylindrical pores of radius $r_p$, depth $h$ and
surface porosity $\phi$, with $r_p, \delta_v\lesssim1\ll2\pi/\alpha$, so
that \eqref{eq:ordering} holds at the pore scale and the flow in each pore
is one-dimensional and driven by $\hat p_w$.  Solving the oscillatory
pipe-flow (Womersley) and conduction problems across the pore
(Appendix~\ref{app:pore}) yields the classical Zwikker--Kosten effective
medium \citep{zwikker1949sound}: with
$\zeta=r_p(\omega/\nu_w)^{1/2}\ee^{\ii\pi/4}$ and $H(z)=2J_1(z)/[zJ_0(z)]$,
\begin{equation}
\rho_d=\frac{\bar\rho_w}{1-H(\zeta)},
\qquad
C_d=M^2\Big[1+(\gamma-1)H\big(\zeta\sqrt{Pr}\big)\Big],
\label{eq:ZK}
\end{equation}
the dynamic density and (volumetric) compressibility,
$\hat\rho/\bar\rho=C_d\hat p$, interpolating between adiabatic
($H\to0$) and isothermal ($H\to1$) sound in the pore gas.  With pore
wavenumber and characteristic admittance
$\Lambda=\omega(\rho_dC_d)^{1/2}$, $Y_0=(C_d/\rho_d)^{1/2}$, the input
admittance of a blind pore of depth $h$, times the porosity, is
\begin{equation}
A_p=-\ii\,\phi\,Y_0\tan(\Lambda h).
\label{eq:Ap}
\end{equation}

Viscothermal damping selects the decaying pore branch; the corresponding positive-real property of \(A_p\) is verified energetically in Proposition~\ref{prop:passivity}.  The shallow limit is the cavity
compliance $A_p\simeq-\ii\omega\phi C_dh$; strongly damped (narrow) pores
saturate at the semi-infinite-absorber value $A_p\to\phi Y_0$ for
$h\gtrsim1.5$; between these limits lie quarter-wave resonances
$\Real\Lambda\,h\approx(n+\tfrac12)\pi$ at which $|A_p|$ peaks ---
prominent for wide pores, washed out for narrow ones
(\S\ref{ssec:mech-results}).  At the reference conditions
($\phi=0.3$, $r_p=0.2$, $h=2$, $\Rey=2000$) the coating has
$A_p=0.905-0.740\ii$, again $\mathcal O(1)$.

\subsection{Shallow, dense, non-separating roughness}
\label{ssec:rough}

We next consider distributed roughness whose height and spacing are small
compared with the outer boundary-layer scale and whose local flow remains
attached.  Let \(b\) denote the typical roughness amplitude, \(k_r\) a
representative roughness wavenumber, and \(y_e\) the effective protrusion
height measured from the chosen reference plane.  The regime considered here
is
\begin{equation}
        k_r b\ll1,\qquad
        b\ll\delta_v,\qquad
        \alpha b\ll1,\qquad
        Re_b\equiv\frac{\rho_w U_{\rm rel} b}{\mu_w}\ll1 ,
\label{eq:rough-validity}
\end{equation}
where \(U_{\rm rel}\) is the characteristic oscillatory velocity seen by the
texture.  These assumptions express small slope, confinement inside the
viscous wall layer, small size relative to the instability wavelength, and
absence of local inertial separation.  Thus the result below does not apply
to isolated roughness elements, finite-height trips, separated cavities, or
roughness-induced receptivity mechanisms.

In the regime \eqref{eq:rough-validity}, the classical homogenisation
description of fine riblets and grooves replaces the physical wall by an
equivalent smooth impermeability plane displaced by the protrusion height
\(y_e\) \citep{bechert1989viscous,luchini1991resistance}.  We take \(y_e>0\) when the
effective impermeability plane is displaced toward the fluid, and \(y_e<0\)
for groove-dominated geometries referenced to the land surface.  For a
small-amplitude sinusoidal texture, for example, \(y_e=\mathcal O(k_r b^2)\).  The
outer no-penetration condition is therefore imposed at \(y=y_e\):
$\hat v(y_e)=0$.
Transferring this condition to the reference plane gives
\begin{equation}
        \hat v(0^+)
        =
        -y_e\hat v'(0^+)
        +
        \mathcal O(y_e^2\hat v''(0^+)).
\label{eq:rough-transfer}
\end{equation}

It remains to evaluate \(\hat v'(0^+)\) from the outer inviscid equations.
From the wall-normal momentum equation,
        $\ii\alpha\bar\rho(U-c)\hat v=-\hat p'$ and
        $\bar\rho=T^{-1}$.
At the rigid wall \(U(0)=0\), \(\hat p'(0)=0\), and differentiating the
wall-normal momentum relation gives
\[
        -\ii\alpha\frac{c}{T_w}\hat v'(0)
        =
        -\hat p''(0).
\]
The Rayleigh equation evaluated at the wall, again using
\(\hat p'(0)=0\), gives
\[
        \hat p''(0)
        =
        \alpha^2
        \left(
        1-\frac{M^2c^2}{T_w}
        \right)
        \hat p(0).
\]
Hence
\[
        \hat v'(0)
        =
        -\ii\,\frac{\alpha T_w}{c}
        \left(
        1-\frac{M^2c^2}{T_w}
        \right)\hat p(0).
\]
Substitution into \eqref{eq:rough-transfer}, and comparison with
\(\hat v(0^+)=-A_r\hat p(0^+)\), yields
\[
        A_r
        =
        -\ii\,y_e\,\frac{\alpha T_w}{c}
        \left(
        1-\frac{M^2c^2}{T_w}
        \right)
        +
        \mathcal O(\alpha^2 y_e^2)
        +
        \mathcal O\!\left(\frac{b}{\delta_v}A_v\right).
\]
The first remainder is the next Taylor term in the displacement of the
impermeability condition from \(y=y_e\) to the reference plane.  The second
is the leading dissipative correction associated with the extra Stokes-layer
dissipation over the rough wetted area; it is smaller than the leading
reactive protrusion-height term when \(b/\delta_v\ll1\).  Retaining only the
leading term gives the roughness admittance used below:
\begin{equation}
        A_r
        =
        -\ii\,y_e\,\frac{\alpha T_w}{c}
        \left(
        1-\frac{M^2c^2}{T_w}
        \right).
\label{eq:Ar}
\end{equation}

The bracket in \eqref{eq:Ar} is the wall-normal acoustic
parameter \(1-M_w^2\), where \(M_w=Mc/\sqrt{T_w}\) is the relative Mach
number of the phase speed measured with the wall sound speed.  For the
second Mack mode at \(M=4.5\), \(M_w^2\approx3.3>1\).  Thus protrusion-type
roughness, \(y_e>0\), gives \(A_r=+\ii|a_r|\) to leading order, whereas
groove-type roughness, \(y_e<0\), gives the opposite reactance.  By
Corollary~\ref{cor:phase},
        $\delta\sigma
        =
        \alpha\Imag(KA_r)
        =
        \pm\alpha |a_r|\Real K$,
so the sign of the roughness effect is not universal: it is determined by
the sign of \(\Real K\) for the mode and wavenumber under consideration.
Consequently the roughness effect must reverse along any branch on which
\(\arg K\) crosses \(\mp\pi/2\), a parameter-free prediction tested in
\S\ref{ssec:mech-results}.  The result should therefore be read as a
leading reactive homogenised-wall correction, not as a theory of arbitrary
surface roughness.

\subsection{Passivity of the effective admittances}
\label{ssec:passivity-admittances}

The sign convention in \eqref{eq:Adef-intro} was chosen so that a passive
wall has \(\Real A\ge0\).  We now verify this property directly for the
effective admittances derived above.  The calculation is also a useful check
on the signs in \eqref{eq:Av}--\eqref{eq:Ar}.  For a harmonic disturbance,
the time-averaged power per unit plan area entering the wall is
\[
        \mathcal P_w
        =
        -\frac12\Real\{\hat p_w\hat v_w^*\}
        =
        \frac12|\hat p_w|^2\Real A ,
\]
because \(\hat v_w=-A\hat p_w\).  Thus \(\Real A\ge0\) is equivalent to
non-negative time-averaged energy absorption by the wall.

\begin{proposition}[Passivity of the wall-layer admittances]
\label{prop:passivity}
For the sign convention \(\hat v_w=-A\hat p_w\), the viscous and thermal
wall-layer admittances satisfy
$\Real A_v\ge0$, $\Real A_T\ge0$,
with strict inequality for non-zero wall pressure.  The blind-pore coating
admittance satisfies
$\Real A_p\ge0$
whenever the pore walls are passive and the branch of the pore wavenumber is
chosen with viscothermal decay into the coating.  The shallow roughness
admittance is lossless:
$\Real A_r=0$.
\end{proposition}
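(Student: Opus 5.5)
The plan is to check each admittance separately. For the closed-form wall-layer laws this is direct evaluation. For the coating I will use an energy identity inside a pore. The passivity statements are meant for a neutral wave, so I first take $c$ real and positive, hence $\omega=\alpha c>0$ for $\alpha>0$. The case $c_i\neq0$ is covered by the standard generalisation mentioned in \S\ref{sec:rayleigh}, or by continuity in $c_i$ at fixed sign.

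\textbf{Viscous and thermal layers.} From \eqref{eq:Av}, $A_v=\ee^{-\ii\pi/4}\,\kappa$ with $\kappa=T_w^2(\alpha/\Rey c^3)^{1/2}>0$. Hence
\[
\Real A_v=\kappa\cos(\pi/4)=\kappa/\sqrt2>0 .
\]
The same argument applies to \eqref{eq:AT}: its prefactor $(\gamma-1)M^2T_w(\alpha c/\Rey)^{1/2}/\sqrt{Pr}$ is positive, so $\Real A_T>0$. Because the wall power is $\tfrac12|\hat p_w|^2\Real A$, the inequality is strict whenever $\hat p_w\neq0$.

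To make this an energy statement rather than bare algebra, I would also give a short dissipation identity for each layer. For the Stokes layer, $\hat u=\hat u_s(1-\ee^{-my})$. Multiply the layer momentum equation $-\ii\omega\hat u=\nu_w\hat u''+\ii\alpha T_w\hat p_w\cdot(\text{const})$ by $\hat u^*$ and integrate across the layer. Taking the real part then expresses $\Real(\hat p_w\hat v_w^*)$ through $-\nu_w\int|\hat u'|^2\,\dd y\le0$. Combined with $\hat v_w=\ii\alpha\hat u_s/m$, this reproduces the sign found above. The thermal layer is handled the same way, with conduction dissipation $\int|\hat\theta'|^2$ replacing the viscous term. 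I will verify the sign carefully against the convention $\mathcal P_w=-\tfrac12\Real\{\hat p_w\hat v_w^*\}$.

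\textbf{Roughness.} For $c$ real, the factor $y_e\alpha T_w(1-M^2c^2/T_w)/c$ in \eqref{eq:Ar} is real. So $A_r$ is purely imaginary and $\Real A_r=0$.

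\textbf{Blind pores.} This is the main obstacle, because $\Real\{-\ii Y_0\tan(\Lambda h)\}$ cannot be read off directly. Instead I would prove $\Real A_p\ge0$ by an energy identity for the one-dimensional pore equations:
\[
-\ii\omega\rho_d\hat w=-\hat p_z,\qquad -\ii\omega C_d\hat p=-\hat w_z,
\]
with $\hat w(-h)=0$ at the blind end and $\hat p(0)=\hat p_w$. Multiply the first equation by $\hat w^*$ and the conjugate of the second by $\hat p$, integrate over $(-h,0)$, and add. This gives
\[
\hat p_w\hat w^*(0)=\ii\omega\!\int\!\big(\rho_d|\hat w|^2-C_d^*|\hat p|^2\big)\dd z .
\]
The real part is then $-\omega\!\int\!\big(\Imag\rho_d|\hat w|^2+\Imag C_d\,|\hat p|^2\big)$, up to signs to be fixed.

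Passivity therefore reduces to sign conditions on $\Imag\rho_d$ and $\Imag C_d$. Specifically, I need $\Imag\rho_d\ge0$ and $\Imag C_d\ge0$ under the $\ee^{-\ii\omega t}$ convention, where the sign is fixed by the viscothermal decay branch. These are the standard positive-real properties of the Zwikker--Kosten functions. I would establish them from the radial Womersley and conduction problems in Appendix~\ref{app:pore}, again by energy. Multiply the radial equation $-\ii\omega\hat w-\nu_w r^{-1}(r\hat w_r)_r=-\hat p_z/\bar\rho_w$ by $r\hat w^*$ and integrate. This shows that the cross-section-averaged response has its imaginary part controlled by $\nu_w\int r|\hat w_r|^2\ge0$. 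The same argument for temperature uses $\int r|\hat\theta_r|^2$.

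Finally, the outward wall velocity is $\hat v_w=\phi\hat w(0)$, with the sign fixed by orientation. Multiplying by $\phi\ge0$ then gives $\Real A_p\ge0$ for passive pore walls. The delicate part is tracking the orientation and time-convention signs consistently so that the decaying branch of $\Lambda$ corresponds to nonnegative dissipation.
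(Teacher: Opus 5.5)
Your proposal is correct and follows essentially the paper's route: $\Real A_v,\Real A_T>0$ from the Stokes phase $\ee^{-\ii\pi/4}$ (backed by viscous and conductive dissipation identities), $\Real A_r=0$ at real frequency, and an energy balance in the pore. For the pore, your transmission-line identity combined with $\Imag\rho_d\ge0$ and $\Imag C_d\ge0$ from the radial Womersley and conduction problems is the paper's volume dissipation identity, factored through the Zwikker--Kosten coefficients and with the signs made explicit. One small correction: those signs come from $\omega>0$ and the dissipation integrals, not from a branch choice, because $H$ is even and $Y_0\tan(\Lambda h)=\Lambda\tan(\Lambda h)/(\omega\rho_d)$ is even in $\Lambda$, so for finite depth the decay-branch hypothesis is not needed.
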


\begin{proof}
First consider the viscous Stokes layer.  Let
\[
        \hat u=\hat u_s(1-\ee^{-my}),
        \qquad
        m=\left(\frac{\omega}{\nu_w}\right)^{1/2}\ee^{-\ii\pi/4},
        \qquad
        \hat u_s=\frac{T_w}{c}\hat p_w .
\]
The harmonic kinetic-energy balance for the inner streamwise momentum
equation gives
\[
        \frac12|\hat p_w|^2\Real A_v
        =
        \frac{1}{2\Rey}
        \int_0^\infty
        \mu_w|\hat u_y|^2\,\dd y ,
\]
up to the nondimensional constants already absorbed in the definition of
\(A_v\).  The right-hand side is non-negative and is positive unless the
slip velocity vanishes.  Equivalently, inserting \eqref{eq:Av} directly
gives
\[
        A_v
        =
        \ee^{-\ii\pi/4}
        T_w^2\left(\frac{\alpha}{\Rey c^3}\right)^{1/2},
\]
which has positive real part for the trapped growing modes considered here. 
The thermal layer is identical in
structure: multiplying the inner heat equation by the complex conjugate of
the temperature fluctuation and taking the real part gives
\[
        \frac12|\hat p_w|^2\Real A_T
        =
        \frac{1}{2\Rey Pr}
        \int_0^\infty
        k_w|\hat\theta_y|^2\,\dd y
        \ge0 ,
\]
again up to the nondimensional prefactor used in \eqref{eq:AT}.  Hence
\(\Real A_T\ge0\).

For the pore coating, let \(\Omega_p\) denote one pore and let
\(\langle\hat u\rangle\) be the cross-sectionally averaged axial velocity.
The averaged acoustic energy balance for the Womersley--thermal pore problem
gives
\[
        \frac12|\hat p_w|^2\Real\!\left(\frac{A_p}{\phi}\right)
        =
        \frac12
        \int_{\Omega_p}
        \left[
        \mu_w|\nabla\hat u|^2
        +
        \frac{k_w}{T_w}|\nabla\hat\theta|^2
        \right]\dd V
        \ge0 ,
\]
with the precise nondimensional constants depending only on the pressure,
velocity and temperature scalings.  Multiplication by the positive porosity
\(\phi\) preserves the sign, so \(\Real A_p\ge0\).  This is the energetic
counterpart of the Zwikker--Kosten formula \eqref{eq:Ap}: viscous and
thermal diffusion in the pore make the input admittance positive real for
the decaying branch of the pore transmission line.

Finally, the roughness admittance \eqref{eq:Ar} is
\[
        A_r
        =
        -\ii y_e\frac{\alpha T_w}{c}
        \left(1-\frac{M^2c^2}{T_w}\right).
\]
At leading order the homogenised roughness model only displaces the
effective impermeability plane and contains no dissipative inner mechanism.
For a neutral mode, or for the real-frequency passivity check obtained by
setting \(c_i=0\), this admittance is purely imaginary; hence
        $\Real A_r=0$.
It is therefore a reactive, lossless wall response.  Any positive resistance
associated with real roughness arises at the next order through enhanced
Stokes-layer dissipation over the wetted area, and is not part of the
leading protrusion-height admittance \eqref{eq:Ar}.
\end{proof}

\section{Numerical validation and Mach-4.5 demonstration}
\label{sec:numerics}

\subsection{Method and base flow}
\label{ssec:method}

The eigenvalue problem \eqref{eq:rayleigh}, \eqref{eq:farfield}, \eqref{eq:wallBC} is solved by shooting: a fixed-grid fourth-order
Runge--Kutta march from $y_b=20.8$ ($\approx2\delta_{99}$) to the wall on
$4000$ intervals, with the dispersion function
$F(c)=\hat p'(0)+\ii\alpha cA\hat p(0)/T_w$ driven to zero by damped complex
Newton iteration; for $c$-dependent admittances the eigenvalue and
admittance are iterated to joint self-consistency.  Eigenvalues are
grid- and domain-converged to nine significant figures
(Appendix~\ref{app:num}).  Disturbance-layer Prandtl number is $Pr=0.72$;
the reference Reynolds number is $\Rey=2000$ ($Re_x=2\times10^6$).  The
base flow is shown in figure~\ref{fig:1}(a).

\begin{figure}
\centering
\includegraphics[width=\textwidth]{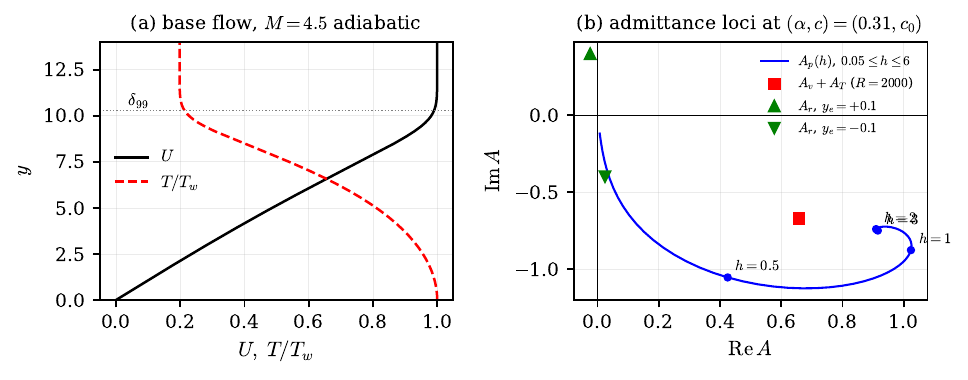}
\caption{(a) Compressible Blasius base flow at $M=4.5$ (adiabatic wall,
Chapman law).  (b) The admittance map at the second-mode reference point
$\alpha=0.31$, $c=c_0=0.9082+0.0295\ii$: locus of the porous admittance
$A_p(h)$, \eqref{eq:Ap}, for $\phi=0.3$, $r_p=0.2$, $\Rey=2000$ as the
depth $h$ varies (the spiral converges to the semi-infinite-absorber point
$\phi Y_0$); viscous admittance $A_v+A_T$ at $\Rey=2000$ (square); reactive
roughness admittances $A_r$ for $y_e=\pm0.1$ (triangles).  Passivity
confines all dissipative elements to $\Real A\ge0$.}
\label{fig:1}
\end{figure}

\subsection{Rigid-wall spectrum}
\label{ssec:rigid}

The rigid adiabatic plate at $M=4.5$ supports two unstable branches
(black curves in figure~\ref{fig:2}).  The low-wavenumber branch
(``mode~I'') is the inflectional inviscid
instability, with peak growth $\sigma=0.0064$ near $\alpha=0.21$.  The
high-wavenumber branch (``mode~II'', the second Mack mode) occupies
$0.255<\alpha<0.47$ and peak growth
$\sigma=0.00918$ at $\alpha=0.315$, $c=0.9067+0.0292\ii$.  Both branches
are trapped, $c_r>1-1/M=0.778$.  Near $\alpha\approx0.26$ the branches
approach (the synchronisation region), where Remark~\ref{rem:local}
applies.

\subsection{Validation of the sensitivity coefficient}
\label{ssec:validation}

Table~\ref{tab:val} compares $K$ from Theorem~\ref{thm:sens} with the
direct quotient $[c(A)-c_0]/A$ computed from the full admittance eigenvalue
problem at $|A|=10^{-3}$, at three second-mode wavenumbers spanning the
band: relative errors are $2.5$--$4.9\times10^{-4}$, consistent with the
$\mathcal O(|A|)$ remainder of the linearisation.  The linear range extends far
beyond infinitesimal admittance: figure~\ref{fig:5}(a) compares the
first-order prediction $\delta\sigma=\alpha\Imag(KA_p)$ with direct
eigenvalues over the full porous-depth sweep, where $|A_p|$ reaches $1.3$;
the formula tracks the direct computation to graphical accuracy, with the
expected mild saturation of the true shift at the largest admittances. The validation points in table~\ref{tab:val} were chosen away from the
closest approach of the two rigid-wall branches. 

\begin{table}[htbp]
\centering
\caption{Validation of Theorem~\ref{thm:sens} at representative second-mode
wavenumbers ($M=4.5$, adiabatic, $\Rey=2000$).  The direct quotient is
computed from the full admittance eigenvalue problem with $A=10^{-3}$.}
\label{tab:val}
\begin{tabular}{ccccc}
\toprule
$\alpha$ & $c_0$ & $K$ from \eqref{eq:thm} & $[c(A)-c_0]/A$ direct &
rel.\ error \\
\midrule
$0.28$ & $0.918421+0.023352\ii$ & $-0.013999-0.007375\ii$ &
$-0.013999-0.007367\ii$ & $4.9\times10^{-4}$ \\
$0.31$ & $0.908217+0.029513\ii$ & $-0.002380-0.007505\ii$ &
$-0.002380-0.007503\ii$ & $2.5\times10^{-4}$ \\
$0.36$ & $0.900805+0.016442\ii$ & $+0.003697-0.000680\ii$ &
$+0.003696-0.000681\ii$ & $3.9\times10^{-4}$ \\
\bottomrule
\end{tabular}
\end{table}

\begin{figure}[htbp]
\centering
\includegraphics[width=\textwidth]{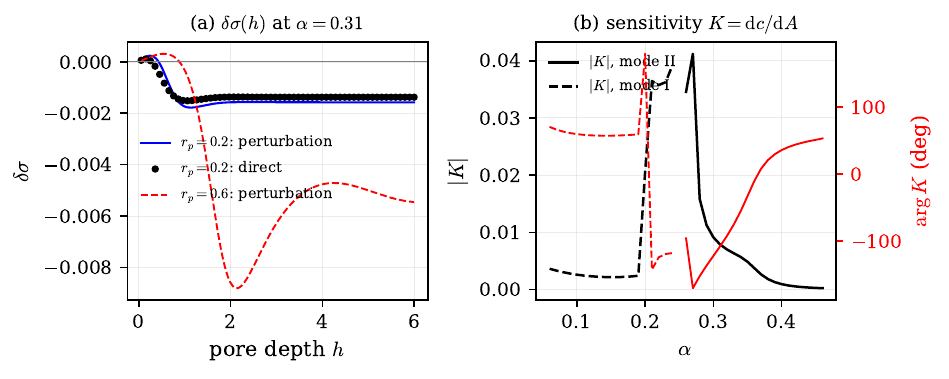}
\caption{(a) Growth-rate shift against pore depth at $\alpha=0.31$:
first-order prediction $\alpha\Imag(KA_p)$ (line) versus direct eigenvalues
(circles) for narrow pores ($r_p=0.2$), and the first-order estimate for
wide pores ($r_p=0.6$, quarter-wave resonances).  (b) Sensitivity
coefficient $K(\alpha)$ along the two rigid branches: modulus (black, left
axis) and phase (red, right axis).}
\label{fig:5}
\end{figure}

\subsection{The admittance map}
\label{ssec:map}

Figure~\ref{fig:1}(b) places the three mechanisms in the complex-$A$ plane
at the second-mode reference point.  The porous locus $A_p(h)$ spirals from
the cavity-compliance regime into the semi-infinite-absorber point
$\phi Y_0$; the viscous point sits at phase $-45^\circ$ with $\mathcal O(1)$
modulus (\S\ref{ssec:stokes}); the roughness reactances lie on the
imaginary axis.  By \eqref{eq:additive}, combined wall treatments are
vector sums on this map, and by Corollary~\ref{cor:phase} their effect on
the mode is read off from the projection onto the direction
$\ii\overline{K}/|K|$.

\subsection{Porous, viscous and rough walls}
\label{ssec:mech-results}

\emph{Porous coatings} (figure~\ref{fig:2}).  In the attenuative regime
considered (narrow pores, $r_p=0.2$), coatings of depth $h=1,2,3$ reduce
the second-mode peak growth from $0.00918$ to $0.00778$--$0.00786$
($14$--$15\%$), the damping concentrated in the band core where the wall
pressure is largest. 
Mode~I responds weakly (peak $0.00637\to0.00616$).
The depth dependence at fixed $\alpha=0.31$
(figure~\ref{fig:5}a) shows the two coating regimes anticipated from
\eqref{eq:Ap}: narrow pores saturate at the depth-insensitive
semi-infinite-absorber level for $h\gtrsim1.5$, while the perturbation
estimate for wide pores ($r_p=0.6$) exhibits quarter-wave resonances at
$h\approx2.0,\,4.2$; near such resonances $|A_p|$ is no longer small and
the first-order formula is a phase diagnostic only (Remark~\ref{rem:O1}).
We emphasise that coating behaviour outside this attenuative regime ---
resonant reinforcement, coating-induced modes --- is documented in the
literature \citep{bres2013second,fedorov2011transition} and is not claimed to be covered
by the present leading-order description.

\begin{figure}[htbp]
\centering
\includegraphics[width=\textwidth]{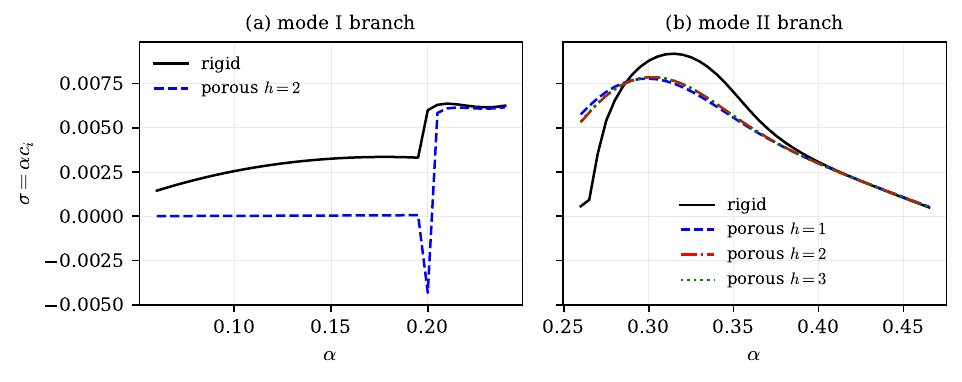}
\caption{Growth rates $\sigma=\alpha c_i$ of (a) the mode-I branch and
(b) the mode-II branch, rigid wall versus porous coatings \eqref{eq:Ap}
with $\phi=0.3$, $r_p=0.2$, $\Rey=2000$, $h=1,2,3$ (attenuative regime).}
\label{fig:2}
\end{figure}

\emph{Viscous wall layers} (figure~\ref{fig:3}).  Closing the inviscid
Rayleigh problem with $A_v+A_T$ at $\Rey=500,2000,8000$ damps the second
mode at all Reynolds numbers --- the Stokes phase $-45^\circ$ lies inside
the stabilising half-plane of Corollary~\ref{cor:phase} with
\eqref{eq:K0} --- the peak growth decreasing linearly in $\Rey^{-1/2}$
(figure~\ref{fig:3}b) as the asymptotics demand, by $17\%$ at $\Rey=500$.
This reproduces, with the mechanism explicit, the classical finding that
the second mode is essentially inviscid with viscosity acting as a
predominantly stabilising correction \citep{mack1984boundary}; for mode~I at low
wavenumber the rotation of $\varphi_K$ (figure~\ref{fig:5}b) moves the
$-45^\circ$ phase toward the destabilising half-plane, the
inviscid-framework counterpart of viscous destabilisation of
Tollmien--Schlichting-type waves \citep{gaponov1975effect,mack1984boundary}.

\begin{figure}
\centering
\includegraphics[width=\textwidth]{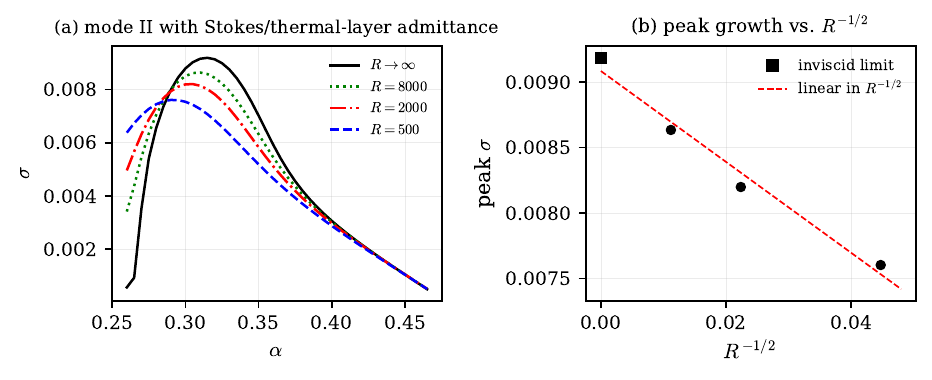}
\caption{(a) Second-mode growth rates with the viscous wall-layer
admittance \eqref{eq:Av}--\eqref{eq:AT} at $\Rey=500,2000,8000$, against
the inviscid (rigid) limit.  (b) Peak growth versus $\Rey^{-1/2}$.}
\label{fig:3}
\end{figure}

\emph{Roughness} (figure~\ref{fig:4}).  The reactive admittance
\eqref{eq:Ar} with $y_e=\pm0.1,\pm0.2$ produces growth-rate shifts linear
in $y_e$ whose \emph{sign reverses along the branch}: protrusions damp the
mode strongly on the lower side of the band (at $\alpha=0.28$,
$c_i=0.0234\to0.0040$ for $y_e=0.2$, the large response reflecting the
spike of $|K|$ near synchronisation) and through the growth peak (peak
$\sigma$ down $3.0\%$ at $y_e=0.2$), but amplify the upper part of the band
($\alpha\gtrsim0.33$); grooves mirror this behaviour.  The crossing
coincides, as Corollary~\ref{cor:phase} requires for a $\pm\ii$ admittance,
with the point where $\arg K$ crosses $-\pi/2$, i.e.\ where $\Real K$
changes sign (figure~\ref{fig:5}b): the computation thus validates not
just the magnitude but the \emph{phase structure} of the sensitivity
coefficient.  Net effect on the band maximum: protrusions reduce, grooves
increase the peak second-mode growth, consistent in sign with wavy-wall
observations \citep{fujii2006experiment,chuvakhov2016spontaneous}; the same reactance with the
mode-I sensitivity changes r\^ole, a reminder that ``roughness'' has no
universal sign \citep{fransson2006delaying}.

\begin{figure}
\centering
\includegraphics[width=0.55\textwidth]{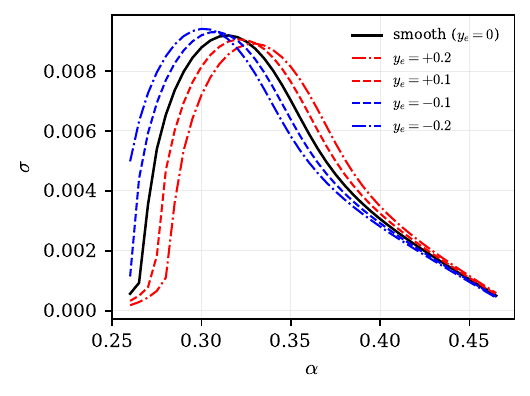}
\caption{Second-mode growth rates over shallow, dense, non-separating
roughness modelled by the reactive admittance \eqref{eq:Ar}.  The sign of
the effect reverses near $\alpha\approx0.33$, where $\arg K$ crosses
$-\pi/2$ (figure~\ref{fig:5}b), as Corollary~\ref{cor:phase} requires.}
\label{fig:4}
\end{figure}

\subsection{The distinguished phase band}
\label{ssec:band}

Figure~\ref{fig:5}(b) shows $K(\alpha)$ along both rigid branches.  At the
second-mode growth maximum,
\begin{equation}
K=-0.00238-0.00750\,\ii,\qquad |K|=0.00787,\qquad
\varphi_K=-107.6^\circ ;
\label{eq:K0}
\end{equation}
along the branch $\varphi_K$ rotates from $\approx-152^\circ$ in the
synchronisation region through $-108^\circ$ at the peak to $+49^\circ$ at
the upper edge.  The stabilising half-plane of Corollary~\ref{cor:phase}
rotates with it.  Phases near the Stokes value $-45^\circ$ and the coating
value $-39^\circ$ remain stabilising over essentially the whole band ---
whence the robustness of the damping in figures~\ref{fig:2} and
\ref{fig:3} --- pure resistance ($\varphi=0$) is stabilising for
$\alpha\lesssim0.37$, and the reactances $\pm90^\circ$ change sides exactly
where $\varphi_K$ crosses $\mp90^\circ$, as figure~\ref{fig:4} confirms.
For mode~I at low $\alpha$ the phase differs by nearly $180^\circ$,
reversing these r\^oles.  This \emph{distinguished band} of stabilising
phases, an attribute of the rigid-wall mode alone, is the practical content
of Theorem~\ref{thm:sens}: one complex number per mode and wavenumber
determines the first-order effect of every thin wall mechanism, alone or in
combination.

\section{Discussion}
\label{sec:discussion}

The contribution of this paper is one theorem and one mechanics
demonstration.  Theorem~\ref{thm:sens} is an explicit spectral perturbation
result for the compressible Rayleigh operator under boundary perturbation:
the first-order response of a simple trapped eigenvalue to \emph{any} wall
admittance is $KA$, with $K$ a closed-form functional of the rigid-wall
eigenfunction, and with the phase criterion of Corollary~\ref{cor:phase} as
its immediate consequence.  Section~\ref{sec:mechanisms} shows that the
admittance is precisely the object that thin-wall asymptotics delivers ---
for viscous sublayers, blind-pore coatings and shallow non-separating
roughness, under the explicit ordering of Assumption~\ref{ass:thin}, with
leading-order additivity --- and \S\ref{sec:numerics} validates both the
coefficient and its phase structure (the
sign-reversal of the roughness effect at $\arg K=-\pi/2$) at Mach 4.5.
The framework gives a leading-order spectral and asymptotic description of
how thin wall mechanisms modify trapped compressible boundary-layer modes;
it is intended as a diagnostic and design calculus complementary to, not a
replacement for, full viscous stability theory, direct simulation or
spatial transition prediction.

Limitations delimit the claims.  The analysis is locally parallel and
two-dimensional; oblique and non-parallel extensions require only the
corresponding rigid-wall eigenfunctions in \eqref{eq:thm}, the admittances
being unchanged.  Assumption~\ref{ass:spectral} excludes nearly degenerate
eigenvalues: in synchronisation regions the linear range shrinks and a
two-mode (degenerate) perturbation theory would be required.
Assumption~\ref{ass:thin} excludes isolated or high roughness, separated
flow over textures, and coating regimes with strong pore--pore or
pore--mode interaction; resonant reinforcement and coating-induced modes
documented at other conditions \citep{bres2013second,fedorov2011transition} lie outside
the attenuative regime computed here.  The quantitative results retain the
full admittance boundary condition; the first-order formula is exact only
in the limit, and at $\mathcal O(1)$ admittance it is used as a phase diagnostic
(Remark~\ref{rem:O1}). A different perturbation problem arises near synchronisation, where two
rigid-wall eigenvalues are separated by a gap comparable with the admittance
perturbation.  Then the scalar expansion \(c(A)=c_0+KA+\mathcal O(|A|^2)\) must be
replaced by a finite-dimensional perturbation on the nearly resonant
eigenspace.  Formally, if two rigid-wall modes \(c_1,c_2\) are retained, the
leading approximation has the form
\[
        \det\left[
        \begin{pmatrix}
        c_1&0\\[2pt]
        0&c_2
        \end{pmatrix}
        +
        A
        \begin{pmatrix}
        K_{11}&K_{12}\\[2pt]
        K_{21}&K_{22}
        \end{pmatrix}
        -
        cI
        \right]=0,
\]
where the diagonal entries reduce to the scalar sensitivities away from the
interaction, while the off-diagonal entries are adjoint-mode boundary
bilinear forms coupling the two branches.  Developing and validating this
two-mode theory is beyond the scope of the present paper; here we restrict
Theorem~\ref{thm:sens} and its validation to isolated simple eigenvalues.

\appendix

\section{Extensions to amplitude-dependent and acoustic admittances}
\label{appsec:extensions}

Two extensions illustrate the reach of the boundary condition
\eqref{eq:wallBC} beyond linear, trapped dynamics; both are presented as
outlines, not as developed theories.

\emph{Amplitude dependence.}  Oscillatory flow through pore orifices
acquires, at finite amplitude, the quadratic resistance of nonlinear liner
acoustics \citep{ingard1967acoustic,melling1973acoustic,cummings1986transient}.  In the
describing-function (first-harmonic) approximation the coating impedance
becomes $Z_{\rm NL}=A_p^{-1}+\Theta|\hat v_w|$ with
$\Theta=(1-\phi^2)/(2C_D^2\phi^2T_w)$, yielding a scalar fixed point for
$A_{\rm NL}(\varepsilon)$ at wall-pressure amplitude $\varepsilon$, and a
quasi-linear growth rate $\sigma(\varepsilon)$ from the same eigenvalue
problem.  Since $A_{\rm NL}\to0$ as $\varepsilon\to\infty$, the coating
benefit erodes universally with the scaled amplitude
$x=\Theta|A_p|\varepsilon$, like $x^{-1/2}$ at large $x$; computations at
$\alpha=0.31$ (figure~\ref{fig:6}) give half-effectiveness amplitudes
$\varepsilon_{1/2}\approx3$, far above the linear-disturbance regime, so
the reference coating is amplitude-robust over the range in which linear
theory itself applies.  For configurations that are linearly stabilised
($\sigma(0)<0<\sigma_{\rm rigid}$) the construction yields a
\emph{describing-function erosion threshold} $\varepsilon^\ast$ with
$\sigma(\varepsilon^\ast)=0$ (figure~\ref{fig:6}a).  This threshold is not
a full nonlinear transition threshold --- no amplitude equation for the
outer wave is derived --- it is the amplitude at which the
amplitude-dependent wall admittance no longer offsets the linear rigid-wall
growth rate.

\emph{Acoustic admittances.}  The far-field exponent \eqref{eq:farfield}
bounds the trapped regime; if wall or flow modifications push the phase
speed across $c_r<1-1/M$, the eigenvalue continues onto a leaky sheet and
the mode radiates a Mach wave at
$\theta_M=\arccos\{1/[M(1-c_r)]\}$, the wall admittance entering the leaky
dispersion relation through the same condition \eqref{eq:wallBC}
\citep{chuvakhov2016spontaneous}.  For the trapped modes computed here, sound is
generated where wall homogeneity is broken (coating junctions); 
in the Born approximation the receptivity of a junction with admittance jump
$\Delta A$ to sound incident at angle $\theta$ is proportional to the
wall-response function $W(\theta)=\hat p(0)/\hat p_{\rm inc}$ of the
signalling problem --- the same Rayleigh operator and wall condition with
incident and outgoing far-field branches --- weighted by the junction
Fourier transform, and flow-acoustic reciprocity implies the same kernel
controls the emission directivity. 
We do not develop the adjoint machinery
required to make the emission statement rigorous --- for the receptivity
half this is the programme carried through asymptotically by
\citet{dong2020receptivity}, see also \citet{fedorov2001prehistory,ma2003receptivity} --- and accordingly
present $W(\theta)$ as an acoustic diagnostic of wall treatments.  Computed
at the second-mode peak frequency (figure~\ref{fig:7}), the fast-wave
response is broad, peaking at $|W|=2.44$ near $\theta=43^\circ$ for the
rigid wall and reduced at \emph{all} angles by the coating (peak $-9\%$):
the admittance that damps the trapped mode also softens the wall response
to fast incident sound.  Slow waves are critical-layer-shielded
($|W|\lesssim0.2$ away from the grazing degeneracy).  A time-domain
impedance formulation \citep{tam1996time} would be required to carry these
admittances into direct simulation.

\begin{figure}
\centering
\includegraphics[width=\textwidth]{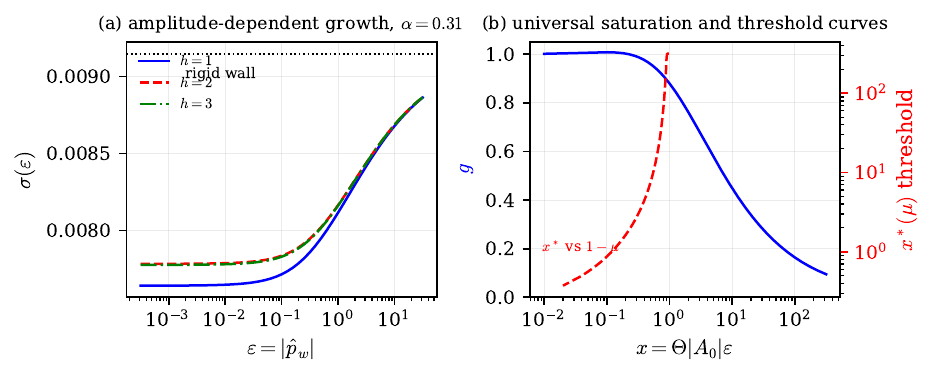}
\caption{Extension: amplitude-dependent coating.  (a) Quasi-linear growth
rate $\sigma(\varepsilon)$ at $\alpha=0.31$ for $h=1,2,3$, eroding toward
the rigid value (dotted).  (b) Universal saturation curve $g(x)$ (left
axis) and describing-function erosion threshold $x^\ast(\mu)$ (right axis,
against $1-\mu$, with $\mu$ the ratio of rigid growth to linear coating
stabilisation).}
\label{fig:6}
\end{figure}

\begin{figure}
\centering
\includegraphics[width=\textwidth]{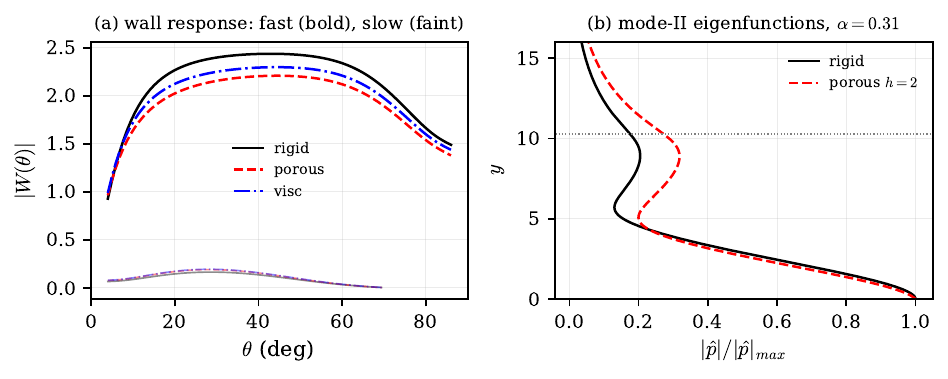}
\caption{Extension: acoustic diagnostic.  (a) Wall-response function
$|W(\theta)|$ at the second-mode peak frequency $\omega=0.2815$ for fast
(bold) and slow (faint, $\theta\le70^\circ$) incident waves, over rigid,
porous and viscous walls.  (b) Second-mode pressure eigenfunctions at
$\alpha=0.31$ over rigid and coated walls; dotted line: $\delta_{99}$.}
\label{fig:7}
\end{figure}

\section{The shear-affected wall layer: Airy form and the triple-deck
connection}
\label{app:airy}

When $\omega\not\gg\alpha\lambda_w\delta_v$, the base shear
$U\simeq\lambda_wy$ must be retained in the wall layer.  With
$\eta=y/\delta_A$, $\delta_A=(\nu_w/(\alpha\lambda_w))^{1/3}$, the shear
perturbation $\tau=\dd\hat u/\dd\eta$ obeys Airy's equation
\[
\frac{\dd^2\tau}{\dd\eta^2}-\ii(\eta-\eta_0)\tau=0,
\qquad
\eta_0=\frac{c}{\lambda_w\delta_A},
\]
with decaying solution
$\tau\propto\Ai(\ee^{\ii\pi/6}(\eta-\eta_0))$.  Imposing no slip, matching
the outer slip velocity $\hat u_s=T_w\hat p_w/c$, and integrating
continuity across the layer gives the uniformly valid admittance
\begin{equation}
A_v^{\rm Airy}
=\frac{\ii\alpha T_w\delta_A}{c}\,
\frac{\int_{\xi_0}^\infty\Ai(s)\,\dd s}{\ee^{\ii\pi/6}\Ai'(\xi_0)},
\qquad
\xi_0=-\ee^{\ii\pi/6}\eta_0,
\label{eq:AvAiry}
\end{equation}
the Tietjens-function form of asymptotic stability theory.  As
$|\eta_0|\to\infty$, $\Ai'(\xi_0)/\!\int_{\xi_0}^\infty\Ai\to
(-\xi_0)^{1/2}$ collapses \eqref{eq:AvAiry} onto the Stokes result
\eqref{eq:Av}; in the opposite limit $\delta_A$ becomes the lower-deck
thickness and \eqref{eq:AvAiry} reproduces the lower-deck displacement
response of triple-deck theory \citep{smith1979non} --- the
Tollmien--Schlichting wave is, in this language, an eigenoscillation of the
inviscid upper flow closed by the wall admittance \eqref{eq:AvAiry}, the
impedance interpretation advocated for receptivity by \citet{dong2020receptivity}.
For the second mode at $\Rey=2000$,
$|\eta_0|\gg1$ and \eqref{eq:Av}--\eqref{eq:AT} are accurate to a few per
cent.

\section{Pore-scale derivation of the Zwikker--Kosten coefficients}
\label{app:pore}

In a pore ($r<r_p$, $-h<z<0$) driven at its mouth by
$\hat p_w\ee^{-\ii\omega t}$, with $r_p$ small compared with the pore
acoustic wavelength and the external scales, the pressure is uniform over
the cross-section and the axial momentum and energy equations reduce to
\[
-\ii\omega\bar\rho_w\hat u=-\hat p_z+\mu_w\frac1r(r\hat u_r)_r,
\qquad
-\ii\omega\bar\rho_wc_p\hat\theta=-\ii\omega\hat p
+\frac{\mu_wc_p}{Pr}\frac1r(r\hat\theta_r)_r,
\]
with $\hat u=\hat\theta=0$ at $r=r_p$.  Both are Womersley problems; with
$k=(\ii\omega/\nu_w)^{1/2}$ the cross-sectional averages are
\[
\langle\hat u\rangle=\frac{\hat p_z}{\ii\omega\bar\rho_w}[1-H(kr_p)],
\qquad
\langle\hat\theta\rangle=(\gamma-1)M^2T_w\,\hat p\,[1-H(kr_p\sqrt{Pr})],
\]
with $H(z)=2J_1(z)/[zJ_0(z)]$.  The first defines $\rho_d$ of
\eqref{eq:ZK}; substituting the second into the linearised state and
continuity equations gives $C_d$.  Averaged continuity and momentum form
the transmission line $\hat p_z=\ii\omega\rho_d\langle\hat u\rangle$,
$\langle\hat u\rangle_z=\ii\omega C_d\hat p$, whose solution with
$\langle\hat u\rangle(-h)=0$ yields the input admittance
$-\ii Y_0\tan(\Lambda h)$ and, after multiplication by the porosity,
\eqref{eq:Ap}.  End corrections and interaction with the overlying Stokes
layer are $\mathcal O(r_p/\delta_v,\phi A_v)$, consistent with
Assumption~\ref{ass:thin}.

\section{The far-field tail contribution $I_\infty$}
\label{app:tail}

On $y>y_b$ both eigenfunctions take the form \eqref{eq:farfield} with
$\gamma=\gamma(c)$.  Substituting
$\hat p_0=P_b\ee^{-\gamma_0(y-y_b)}$ and the perturbed exponent
$\gamma_0+\delta c\,\dd\gamma/\dd c$ into the upper boundary term of
\eqref{eq:identity}, with $\Phi_0\to1/(1-c_0)^2$,
$\dd\gamma/\dd c=\alpha^2M^2(1-c_0)/\gamma_0$, gives
\[
-\Phi_0(y_b)\big(\hat p_0\,\delta\hat p'-\hat p_0'\,\delta\hat p\big)_{y_b}
=\delta c\,\frac{P_b^2}{(1-c_0)^2}\frac{\dd\gamma}{\dd c}
=\delta c\;\frac{2}{(1-c_0)^3}\,
\frac{(\gamma_0^2+\alpha^2)P_b^2}{2\gamma_0}\;=\;\delta c\,I_\infty,
\]
where $\alpha^2M^2=(\gamma_0^2+\alpha^2)/(1-c_0)^2$ by
\eqref{eq:farfield}.  Equivalently, $I_\infty$ equals the volume integral
of \eqref{eq:Idef} evaluated in closed form over the uniform stream
$(y_b,\infty)$ with $U=T=1$ and the tail eigenfunction; the two derivations
agree identically, confirming that the truncation at $y_b$ introduces no
error in $K$.

\section{Numerical details and convergence}
\label{app:num}

The base flow solves $f'''+ff''=0$, $f''(0)=0.46960$ in the
Dorodnitsyn--Howarth variable, with
$T=1+\tfrac12(\gamma-1)M^2(1-U^2)$ and $y=\int_0^\eta T\,\dd\eta$,
tabulated on $4001$ points to $\eta=14$ and spline-interpolated.  The
Rayleigh march uses fixed-step RK4 from $y_b=20.83$ on $4000$ intervals
with half-grid coefficients; eigenvalues agree to nine significant figures
with adaptive eighth-order integration at tolerance $10^{-12}$ and are
unchanged to all quoted figures under doubling of $y_b$ or the step count.
Newton iteration (numerical derivative, step $10^{-7}$, damped to
$|\delta c|\le0.05$) converges in $4$--$8$ iterations; $c$-dependent
admittances converge in $2$--$4$ outer sweeps.  The signalling problem of
\S\ref{appsec:extensions} reuses the marcher from the wall outward,
decomposing the solution at $y_b$ into incoming/outgoing exponentials
selected by the vertical group velocity in the moving stream; the slow-wave
critical layer is regularised by $c\to c+0.003\ii$, with results
insensitive to halving the regularisation except within $1^\circ$ of the
grazing degeneracy.


\end{document}